\newtheorem{theorem}{Theorem}
\newtheorem{lemma}[theorem]{Lemma}
\newtheorem{definition}[theorem]{Definition}
\newtheorem{corollary}[theorem]{Corollary}
\newtheorem{fact}[theorem]{Fact}
\newcommand{\ex}{\mathbb{E}}
\newcommand{\eat}[1]{}
\newcommand{\norm}[1]{||#1||}
\newcommand{\mpq}{\mathcal{M}^p}
\newcommand{\D}[1]{\text{d}#1}
\title{Privately Answering Counting Queries with Generalized Gaussian Mechanisms}
\author{Arun Ganesh
  \footnote{Department of Electrical Engineering and Computer Sciences, UC Berkeley. Email: \texttt{arunganesh@berkeley.edu}.} \and 
  Jiazheng Zhao
  \footnote{Department of Electrical Engineering and Computer Sciences, UC Berkeley. Email: \texttt{zoo@berkeley.edu}.}}
\date{}
\begin{document}
\maketitle
\thispagestyle{empty} 
\setcounter{page}{0}
\begin{abstract}
We consider the problem of answering $k$ counting (i.e. sensitivity-1) queries about a database with $(\epsilon, \delta)$-differential privacy. We give a mechanism such that if the true answers to the queries are the vector $d$, the mechanism outputs answers $\tilde{d}$ with the $\ell_\infty$-error guarantee:

$$\ex\left[\norm{\tilde{d} - d}_\infty\right] = O\left(\frac{\sqrt{k \log \log \log k \log(1/\delta)}}{\epsilon}\right).$$

This reduces the multiplicative gap between the best known upper and lower bounds on $\ell_\infty$-error from $O(\sqrt{\log \log k})$ to $O(\sqrt{\log \log \log k})$. Our main technical contribution is an analysis of the family of mechanisms of the following form for answering counting queries: Sample $x$ from a \textit{Generalized Gaussian}, i.e. with probability proportional to $\exp(-(\norm{x}_p/\sigma)^p)$, and output $\tilde{d} = d + x$. This family of mechanisms offers a tradeoff between $\ell_1$ and $\ell_\infty$-error guarantees and may be of independent interest. For $p = O(\log \log k)$, this mechanism already matches the previous best known $\ell_\infty$-error bound. We arrive at our main result by composing this mechanism for $p = O(\log \log \log k)$ with the sparse vector mechanism, generalizing a technique of Steinke and Ullman. 
\end{abstract}
\newpage

\section{Introduction}

A fundamental question in data analysis is to, given a database, release answers to $k$ numerical queries about a database $d$, balancing the goals of preserving the \textit{privacy} of the individuals whose data comprises the database and preserving the \textit{utility} of the answers to the queries. A standard formal guarantee for privacy is $(\epsilon, \delta)$-differential privacy~\cite{DworkMNS06, DworkKMMN06}. A mechanism $\mathcal{M}$ that takes database $d$ as input and outputs (a distribution over) answers $\tilde{d}$ to the queries is $(\epsilon, \delta)$-differentially private if for any two databases $d, d'$ which differ by only one individual and for any set of outcomes $S$, we have:

\begin{equation}\label{eq:approxdp}
\Pr_{\tilde{d} \sim \mathcal{M}(d)}\left[\tilde{d} \in S\right] \leq e^\epsilon \Pr_{\tilde{d} \sim \mathcal{M}(d')}\left[\tilde{d} \in S\right] + \delta.
\end{equation}

When $\delta = 0$, this property is referred to $\epsilon$-differential privacy. Without loss of generality, we will treat $d$ (resp. $\tilde{d}$) as a $k$-dimensional vector corresponding to the answers to the queries (resp. the answers outputted by the mechanism). In this paper, we focus on the setting of \textit{counting queries}, i.e. queries for which the presence of each individual in the database affects the answers by at most 1. In turn, throughout the paper we say a mechanism taking vectors in $\mathbb{R}^k$ as input and outputting distributions over $\mathbb{R}^k$ is $(\epsilon, \delta)$-differentially private if \eqref{eq:approxdp} holds for any two $k$-dimensional vectors $d, d'$ such that $\norm{d - d'}_\infty \leq 1$ and any subset $S$ of $\mathbb{R}^k$.

To balance the goals of privacy and utility, we seek a mechanism $\mathcal{M}$ that minimizes some objective function of the (distribution of) additive errors $\tilde{d} - d$, while satisfying \eqref{eq:approxdp}. One natural and well-understood objective function is the $\ell_1$-error $\norm{\tilde{d} - d}_1/k$, which gives the average absolute error of the answers to the queries. The well-known and simple \textit{Laplace mechanism}~\cite{DworkMNS06}, which outputs $\tilde{d} = d + x$ with probability proportional to $\exp(-\norm{x}_1 / \sigma)$ for an appropriate value of $\sigma$, achieves expected $\ell_1$-error of $O(\min\{\sqrt{k \log (1/\delta)}, k\}/\epsilon)$. A line of works on lower bounds \cite{HardtT10, BunUV14} culminated in a result of \cite{SteinkeU17} showing this is optimal up to constants.

A less well-understood objective function is the $\ell_\infty$-error $\norm{\tilde{d}-d}_\infty$, which gives the maximum absolute error of the answers to the queries. The maximum absolute error is of course a more strict objective function than the average absolute error; indeed, the Laplace mechanism only achieves error $O(k \log k /\epsilon)$ and the Gaussian mechanism (which outputs $\tilde{d} = d + x$ with probability proportional to $\exp(-\norm{x}_2^2/\sigma^2)$ for an appropriate value of $\sigma$) achieves error $O(\sqrt{k \log k \log (1/\delta)}/\epsilon)$. The first improvements on $\ell_\infty$-error over the Laplace and Gaussian mechanisms were given by \cite{SteinkeU17}\footnote{Their paper considers the problem setting where queries ask what fraction of $n$ individuals satisfy some property, i.e. queries have sensitivity $1/n$ instead of $1$, and the goal is to find the minimum $n$ needed to achieve error at most $\alpha$. Achieving error $\Delta$ with probability $1-\rho$ in our setting is equivalent to needing $n \geq \Delta / \alpha$ to achieve error $\alpha$ with probability $1-\rho$ in their setting.}. To summarize, the results of that paper (which prior to this paper were all the best known results) are:

\begin{itemize}
    \item An $\epsilon$-differentially private mechanism satisfying: 
    \begin{equation}\label{eq:su-pure}
        \Pr_{\tilde{d} \sim \mathcal{M}(d)}\left[\norm{\tilde{d} - d}_\infty \geq O\left(\frac{k}{\epsilon}\right)\right] \leq e^{-\Omega(k)},
    \end{equation}
    (this matches a lower bound of \cite{HardtR10} up to constants).
    \item An $(\epsilon, \delta)$-differentially private mechanism satisfying: 
    \begin{equation}\label{eq:su-apx}
        \Pr_{\tilde{d} \sim \mathcal{M}(d)}\left[\norm{\tilde{d} - d}_\infty \geq O\left(\frac{\sqrt{k \log \log k \log(1/\delta)}}{\epsilon}\right)\right] \leq e^{-\log^{\Omega(1)} k}.
    \end{equation}
    \item A lower bound showing any $(\epsilon, \delta)$-differentially private mechanism must satisfy: 
    \begin{equation}\label{eq:su-lb}
       \ex_{\tilde{d} \sim \mathcal{M}(d)}\left[\norm{\tilde{d} - d}_\infty \right] \geq \Omega\left(\frac{\sqrt{k \log (1/\delta)}}{\epsilon}\right).
    \end{equation}
\end{itemize}

The mechanism achieving \eqref{eq:su-apx} starts by taking the Gaussian mechanism, and then uses the sparse vector mechanism to correct the entries of $x$ with large error in a private manner. In conjunction with \eqref{eq:su-lb}, this nearly but not completely settles the question of the optimal error for this problem. In this paper, we further close the gap between \eqref{eq:su-apx} and \eqref{eq:su-lb} by providing a mechanism that achieves error $O(\sqrt{k \log \log \log k \log(1/\delta)}/\epsilon)$. 

\subsection{Our Results and Techniques}

Our first result is as follows:
\begin{theorem}\label{thm:main-1-simple}
For all  $1 \leq p \leq \log k$, $\epsilon \leq O(1)$, $\delta \in [2^{-O(k / p)}, 1/k]$, there exists a $(\epsilon, \delta)$-differentially private mechanism $\mathcal{M}$ that takes in a vector $d \in \mathbb{R}^k$ and outputs a random $\tilde{d} \in \mathbb{R}^k$ such that for some sufficiently large constant $c$, and all $t \geq 0$:
$$\Pr_{\tilde{d} \sim \mpq_\sigma(d)}\left[\norm{\tilde{d} - d}_\infty \geq \frac{c t \sqrt{kp} \log^{1/p} k \sqrt{\log (1/\delta)}}{\epsilon}\right] \leq e^{-t^p \log k}$$
In particular, this implies:
$$\ex_{\tilde{d} \sim \mathcal{M}(d)}[\norm{\tilde{d} - d}_\infty] = O\left(\frac{\sqrt{kp}\log^{1/p} k \sqrt{\log (1/\delta)}}{\epsilon}\right).$$
We also have for all $1 \leq q \leq p$:
$$\ex_{\tilde{d} \sim \mathcal{M}(d)}\left[\frac{\norm{\tilde{d} - d}_q}{k^{1/q}}\right] = O\left(\frac{\sqrt{kp \log (1/\delta)}}{\epsilon}\right).$$
\end{theorem}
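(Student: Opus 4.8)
The plan is to take the Generalized Gaussian mechanism $\mpq_\sigma$ with noise parameter $\sigma = c_0 \sqrt{kp \log(1/\delta)}/\epsilon$ for a sufficiently large absolute constant $c_0$, and then verify three things in turn: $(\epsilon,\delta)$-differential privacy, the stated $\ell_\infty$ tail bound, and the $\ell_q$ expectation bound. Throughout I would use that the coordinates $x_1,\dots,x_k$ of the noise vector $x$ are i.i.d., each a one-dimensional Generalized Gaussian with density proportional to $\exp(-|x|^p/\sigma^p)$, for which one computes directly the moments $\ex[|x_i|^m] = \sigma^m\,\Gamma(\tfrac{m+1}{p})/\Gamma(\tfrac1p)$ (in particular $\ex[|x_i|^p] = \sigma^p/p$) and the tail estimate $\Pr[|x_i| \ge r] \le \exp(-(r/\sigma)^p)$, valid for all $r \ge \sigma$.

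For privacy I would use the standard privacy-loss characterization: it suffices to show that for every pair of inputs $d, d'$ with $\norm{d-d'}_\infty \le 1$, writing $v = d' - d$, the privacy loss $\mathcal{L} = \sigma^{-p}\sum_i\bigl(|x_i - v_i|^p - |x_i|^p\bigr)$ incurred by a draw $\tilde d = d + x \sim \mpq_\sigma(d)$ satisfies $\Pr[\mathcal{L} > \epsilon] \le \delta$ (by symmetry of the noise, one order of the pair suffices); since each summand is monotone in $|v_i|$, the worst case is $v \in \{-1,1\}^k$. The core step is a concentration bound on $\mathcal{L}$, which I would obtain by splitting each summand into a first-order part and a nonnegative remainder: the first-order part $-p\,\sigma^{-p}\sum_i v_i |x_i|^{p-1}\operatorname{sign}(x_i)$ is a centered sum of independent symmetric random variables with total variance $p^2\sigma^{-2p}\sum_i v_i^2\,\ex[|x_i|^{2p-2}] = O(kp/\sigma^2)$ (here $\ex[|x_i|^{2p-2}]/\sigma^{2p-2} = \Gamma(2-\tfrac1p)/\Gamma(\tfrac1p) = O(1/p)$), while the remainder $\sigma^{-p}\sum_i \bigl(|x_i-v_i|^p - |x_i|^p + pv_i|x_i|^{p-1}\operatorname{sign}(x_i)\bigr)$ is nonnegative, has expectation $O(kp/\sigma^2)$ (this expectation equals the KL divergence $D_{\mathrm{KL}}(\mpq_\sigma(d)\,\|\,\mpq_\sigma(d'))$), and has fluctuations no larger than those of the first-order part. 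Since $|x_i|^{p-1}$ is sub-Weibull, a Bernstein-type inequality gives $\mathcal{L} \le O(kp/\sigma^2) + O\bigl(\sqrt{kp\log(1/\delta)}/\sigma\bigr)$ with probability at least $1 - \delta$; substituting $\sigma = c_0\sqrt{kp\log(1/\delta)}/\epsilon$ makes both terms at most $\epsilon/2$ for $c_0$ large enough, using $\epsilon = O(1)$ and $\delta \le 1/k$. The cases $p = 1$ (the summands are bounded by $1/\sigma$, so plain Bernstein applies) and $1 < p < 2$ (the remainder must be controlled without the naive second-order Taylor estimate, since $|x|^{p-2}$ blows up at $0$) I would handle with minor modifications, splitting the expectation of the remainder according to whether $|x_i|$ exceeds a constant; the hypothesis $\delta \ge 2^{-O(k/p)}$ ensures $\log(1/\delta) = O(k/p)$, hence $\sigma = O(k/\epsilon)$, which keeps us in the sub-Gaussian branch of the Bernstein bound.

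The $\ell_\infty$ tail bound is then a union bound over the $k$ coordinates: with $r = ct\sqrt{kp}\log^{1/p}k\sqrt{\log(1/\delta)}/\epsilon$ we have $r/\sigma = (ct/c_0)\log^{1/p}k$, so $\Pr[\norm{\tilde d - d}_\infty \ge r] \le k\exp(-(r/\sigma)^p) = k^{1-(ct/c_0)^p}$, which is at most $e^{-t^p\log k}$ as soon as $c$ is a sufficiently large multiple of $c_0$ (for $t \ge 1$; for smaller $t$ one uses only the trivial bound $\Pr[\cdot] \le 1$). Writing $R = c\sqrt{kp}\log^{1/p}k\sqrt{\log(1/\delta)}/\epsilon$ and substituting $r = tR$, the $\ell_\infty$ expectation bound follows by integrating the tail: $\ex[\norm{\tilde d - d}_\infty] = \int_0^\infty \Pr[\norm{\tilde d - d}_\infty \ge r]\,dr \le R\bigl(1 + \int_1^\infty e^{-t^p\log k}\,dt\bigr) = O(R)$. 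Finally, for $1 \le q \le p$, the $\ell_q$ bound is immediate: by concavity of $y \mapsto y^{1/q}$ and then the power-mean inequality, $\ex[\norm{\tilde d - d}_q] \le (k\,\ex[|x_1|^q])^{1/q} \le k^{1/q}(\ex[|x_1|^p])^{1/p} = k^{1/q}\sigma\, p^{-1/p} \le k^{1/q}\sigma$, so that $\ex[\norm{\tilde d - d}_q/k^{1/q}] \le \sigma = O(\sqrt{kp\log(1/\delta)}/\epsilon)$.

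The main obstacle is the privacy step: turning the heuristic that $\mathcal{L}$ has mean $O(kp/\sigma^2)$ with fluctuations of order $\sqrt{kp\log(1/\delta)}/\sigma$ into a rigorous bound $\Pr[\mathcal{L} > \epsilon] \le \delta$ requires controlling the exponential moments of a sum of independent, unbounded, sub-Weibull random variables with the right absolute constants and uniformly over the full ranges $1 \le p \le \log k$ and $2^{-O(k/p)} \le \delta \le 1/k$; the remaining steps are a union bound, an elementary integration, and Jensen's inequality.
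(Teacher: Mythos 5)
Your route is viable and genuinely different from the paper's in both halves. For the error bounds, the paper conditions on concentration of $\norm{x}_p$ (a Gamma tail bound), invokes the sphere-cap volume estimate (Lemma~\ref{lemma:spherecap}), and needs a truncation trick (outputting $d$ when $\norm{x}_p$ is too large) to get the clean tail, the $\ell_\infty$ expectation, and the $\ell_q$ expectation; your version --- i.i.d.\ coordinates, the one-dimensional tail $\Pr[|x_i|\ge r]\le e^{-(r/\sigma)^p}$ for $r\ge\sigma$, a union bound, exact moments $\ex[|x_i|^m]=\sigma^m\Gamma(\frac{m+1}{p})/\Gamma(\frac1p)$, and Jensen plus the power-mean inequality --- is correct and bypasses all of that geometric machinery. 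For privacy, the paper bounds $\norm{x-\Delta}_p^p-\norm{x}_p^p\le p\sum_i(x_i+1)^{p-1}$, converts $(x_i+1)^{p-1}$ to $|x_i|^{p-1}$ via Fact~\ref{fact:conversion} (the source of the even-integer restriction and the extra $k^{p/2}p$ term), splits by sign, and applies a Chernoff bound on the signs together with sub-gamma concentration of $GGamma(\frac1{p-1},\frac{p}{p-1})$ (Appendix~\ref{section:generalizedgammas}); your decomposition into the symmetric mean-zero first-order sum $-p\sigma^{-p}\sum_i v_i|x_i|^{p-1}\operatorname{sign}(x_i)$ plus a nonnegative convexity remainder whose mean is the KL divergence $O(kp/\sigma^2)$ is a cleaner organization of Lemma~\ref{lemma:lppdifference}/\ref{lemma:sigmachoice} and would cover all real $p\ge 1$ without rounding.

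The substantive missing piece is exactly the one you flag: the concentration of $\sum_i v_i|x_i|^{p-1}\operatorname{sign}(x_i)$. Be warned that a Bernstein inequality phrased purely in terms of sub-Weibull/sub-exponential Orlicz norms does \emph{not} suffice: $|x_i|^{p-1}$ has $\psi_1$-norm $\Theta(\sigma^{p-1})$, so such a bound yields deviation $\Theta(p\sigma^{-1}\sqrt{k\log(1/\delta)})$, losing the factor $\sqrt{p}$ you need; you must use the variance-sensitive Bernstein form with per-coordinate variance proxy $\Theta(\sigma^{2p-2}/p)$ and scale $O(\sigma^{p-1})$ --- this is precisely what the paper establishes in MGF form in Appendix~\ref{section:generalizedgammas}. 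Fortunately your own moment formula closes this gap in a few lines: $\ex\bigl[(|x_i|/\sigma)^{m(p-1)}\bigr]=\Gamma\bigl(m-\tfrac{m-1}{p}\bigr)/\Gamma(\tfrac1p)\le O\bigl((m-1)!/p\bigr)$ for $m\ge2$, which is Bernstein's moment condition with variance $O(1/p)$ and scale $O(1)$, and since $\log(1/\delta)=O(k/p)$ the sub-Gaussian branch dominates as you say. Two further repairs: (i) the claim that the remainder ``has fluctuations no larger than those of the first-order part'' needs an argument --- e.g.\ bound $0\le R_i\le\frac{p(p-1)}{2}(|x_i|+1)^{p-2}$ and apply a second one-sided Bernstein bound, whose moments are again controlled by your formula; and (ii) the reduction to $v\in\{-1,1\}^k$ via ``each summand is monotone in $|v_i|$'' is false pointwise (take $x_i=1/2$), though unnecessary: prove the bound uniformly over $v\in[-1,1]^k$, noting $|v_i|\le1$ only enters through $v_i^2$ in the variance and through the remainder's mean, both of which are monotone in $|v_i|$. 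Finally, for $0<t<1$ the ``trivial bound $\Pr\le1$'' does not give $e^{-t^p\log k}<1$; your union bound in fact covers all $t$ bounded below by a constant, and the small-$t$ regime is glossed over in the paper's own proof of Theorem~\ref{thm:main-1} as well, so this is a shared wrinkle rather than a defect of your approach.
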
 
We note that the lower bound on $\delta$ in Theorem~\ref{thm:main-1-simple} can easily be removed: if $\delta$ is smaller than $2^{-O(k / p)}$, we can instead use the mechanism achieving \eqref{eq:su-pure}, which matches the error guarantees of Theorem~\ref{thm:main-1-simple} in this range of $\delta$. 

Setting $p = \Theta(\log \log k)$, this result matches the error bound of \eqref{eq:su-apx}. However, this result improves on \eqref{eq:su-apx} qualitatively. Although the mechanism achieving \eqref{eq:su-apx} is already not too complex, the mechanism we use is even simpler to describe: We add noise $x$ to $d$ with probability proportional to $\exp(-(\norm{x}_p/\sigma)^p)$ for an appropriate choice of $\sigma$. We refer to this noise distribution as the \textit{Generalized Gaussian with shape $p$ and scale $\sigma$}, as is it referred to in e.g. \cite{Nadarajah05}, and this family of mechanisms as \textit{Generalized Gaussian mechanisms}. Notably, Generalized Gaussian mechanisms retain the property of the Gaussian mechanism that the noise added to each entry of $d$ is independent (unlike the mechanism giving \eqref{eq:su-apx}, which uses dependent noise), and that the noise has a known closed-form distribution that is easy to sample from\footnote{see e.g. \url{https://sccn.ucsd.edu/wiki/Generalized_Gaussian_Probability_Density_Function}.}. To the best of our knowledge, this is the first analysis giving privacy guarantees for Generalized Gaussian mechanisms besides that in \cite{Liu19}. Even then, \cite{Liu19} does not give any closed-form bounds on the value of $\sigma$ needed for privacy in the counting queries setting. This analysis may be of independent interest for other applications where one would normally use the Gaussian mechanism, but may want to use a Generalized Gaussian mechanism with $p > 2$ to trade average-case error guarantees for better worst-case error guarantees.

We give a summary of our analysis here; the full analysis is given in Section~\ref{section:lpp}. We first need to determine what value of $\sigma$ causes the Generalized Gaussian mechanism to be private. Viewing the Generalized Gaussian mechanism as an instance of the exponential mechanism of \cite{McSherryT07}, this reduces to deriving a tail bound on $\norm{x+1}_p^p - \norm{x}_p^p$ for $x$ sampled from the noise distribution. If $p$ is even this is roughly equal to $p \sum_{j=1}^k x_j^{p-1}$. By a Chernoff bound on the signs of each random variable in the sum, this is roughly tail bounded by the sum of $\sqrt{k \log(1/\delta)}$ of the $x_j^{p-1}$ random variables. These variables are distributed according to a \textit{Generalized Gamma} distribution, which we prove is sub-gamma in Section~\ref{section:generalizedgammas}. This gives us the desired tail bound, and thus an upper bound on the $\sigma$ needed to ensure $(\epsilon, \delta)$-differential privacy. To prove the error guarantees, we derive tail bounds on the $\ell_p$-norm of $x$ sampled from Generalized Gaussian distributions, as well as on the coordinates of points sampled from unit-radius $\ell_p$-spheres, the latter of which is done by upper bounding the volume of ``sphere caps'' of these spheres.
 
Building on this result, we give the best-known worst-case error for answering counting queries with $(\epsilon, \delta)$-differential privacy:
\begin{theorem}\label{thm:main-2-simple}
For all $\epsilon \leq O(1)$, $\delta \in [2^{-O(k / \log \log \log k)}, 1/k]$, $t \in [0, O(\frac{\log k}{\log \log k})]$, there exists a $(\epsilon, \delta)$-differentially private mechanism $\mathcal{M}$ that takes in a vector $d \in \mathbb{R}^k$ and outputs a random $\tilde{d} \in \mathbb{R}^k$ such that for a sufficiently large constant $c$:

$$\Pr_{\tilde{d} \sim \mathcal{M}(d)}\left[\norm{\tilde{d} - d}_\infty \geq \frac{c t \sqrt{k \log \log \log k \log (1/\delta)}}{\epsilon}\right] \leq e^{-\log^t k}.$$

In particular, if we choose e.g. $t = 2$ we get:

$$\ex_{\tilde{d} \sim \mathcal{M}(d)}[\norm{\tilde{d} - d}_\infty] = O\left(\frac{\sqrt{k \log \log \log k \log (1/\delta)}}{\epsilon}\right).$$
\end{theorem}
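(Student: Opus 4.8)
The plan is to compose the Generalized Gaussian mechanism from Theorem~\ref{thm:main-1-simple} with the sparse vector mechanism, following the template of Steinke and Ullman that gave \eqref{eq:su-apx} but replacing their use of the Gaussian mechanism with a Generalized Gaussian mechanism of shape $p = \Theta(\log\log\log k)$. Concretely, first run $\mpq_\sigma$ with $p = c'\log\log\log k$ on a $(\epsilon/2,\delta/2)$-privacy budget to obtain an initial estimate $\tilde d^{(0)} = d + x$. By the first tail bound in Theorem~\ref{thm:main-1-simple}, with $p$ this small the per-coordinate scale is $\sqrt{kp}\log^{1/p}k\sqrt{\log(1/\delta)}/\epsilon$; the factor $\log^{1/p}k = 2^{(\log\log k)/p} = 2^{\Theta(1)}$ is a constant, so the typical coordinate error is already $O(\sqrt{k\log\log\log k\,\log(1/\delta)}/\epsilon)$, which is the target bound. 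The only coordinates that violate the target are the ``bad'' coordinates, and the tail bound $e^{-t^p\log k}$ with $p=\Theta(\log\log\log k)$ shows that the probability a fixed coordinate exceeds $t$ times the target is $e^{-t^p\log k} = k^{-t^p}$; summing over $k$ coordinates, for $t$ a large enough constant the expected number of bad coordinates is $o(1)$, and more strongly, the number of coordinates exceeding the threshold at ``level $t$'' decays rapidly in $t$.

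The second step is to privately identify and re-randomize the bad coordinates. Use the sparse vector (AboveThreshold) mechanism with the remaining $(\epsilon/2,\delta/2)$ budget: for each coordinate $j$, query whether $|x_j|$ (the realized noise, which the mechanism knows) exceeds the target threshold $\Delta = \Theta(\sqrt{k\log\log\log k\,\log(1/\delta)}/\epsilon)$; sparse vector can answer $k$ such threshold queries while only paying (in privacy) for the number $m$ of times the threshold is exceeded, at a cost of noise $\tilde O(\sqrt{m\log(1/\delta)}/\epsilon)$ per positive answer. For each flagged coordinate, overwrite $\tilde d^{(0)}_j$ with a fresh, more accurate private estimate of $d_j$ (e.g. add independent Laplace or Gaussian noise calibrated to the small count $m$). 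Because $m$ is small with high probability — this is exactly what the sharp tail of Theorem~\ref{thm:main-1-simple} buys us — the correction noise on the flagged coordinates is also $O(\sqrt{k\log\log\log k\,\log(1/\delta)}/\epsilon)$. Privacy follows by composition of the two halves, each $(\epsilon/2,\delta/2)$-DP; the Generalized Gaussian half is private by Theorem~\ref{thm:main-1-simple}, and the sparse-vector-plus-correction half is private by the standard analysis of sparse vector together with the lower bound on $\delta$ (which guarantees $\delta$ is not so tiny that the per-query noise blows up).

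For the utility and the precise probability bound, I would set the sparse vector threshold at level $t$ times $\Delta$, so that a coordinate is flagged only if its initial error exceeds $ct\Delta$. The number of flagged coordinates then satisfies a tail of the form claimed — roughly, $\Pr[m \ge 1] \le k\cdot k^{-t^{\Theta(\log\log\log k)}}$, which one massages into the stated $e^{-\log^t k}$ form by choosing the constant in $p$ appropriately and tracking how $t$ enters the exponent; this is where the restriction $t \in [0, O(\log k/\log\log k)]$ comes from, since beyond that range the sparse vector noise (which scales with $m$, hence with the threshold level) would dominate. Conditioned on few flags, every coordinate's final error is $O(t\Delta)$, giving the high-probability bound; taking $t=2$ and integrating the tail gives the expectation bound. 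The main obstacle is the bookkeeping in the composition: one must verify that the noise added by the sparse vector correction step, which scales with the (random) number of corrections $m$, stays within the target even in the tail, i.e. that the event ``$m$ large'' is rare enough that it contributes negligibly to the expectation — this requires the tail bound of Theorem~\ref{thm:main-1-simple} to be strong enough at every level $t$ simultaneously, and is the reason the argument needs the full strength of that theorem rather than just its in-expectation corollary.
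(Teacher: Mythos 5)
Your high-level architecture is the same as the paper's (a Generalized Gaussian mechanism with shape $p=\Theta(\log\log\log k)$ followed by a sparse-vector correction of the large-noise coordinates, with privacy by composition), but the quantitative core of your argument is wrong, and it is exactly the part the proof hinges on. First, with $p=\Theta(\log\log\log k)$ the factor $\log^{1/p}k = 2^{(\log\log k)/p} = 2^{\Theta(\log\log k/\log\log\log k)}$ is \emph{not} a constant (it would be constant only for $p=\Theta(\log\log k)$), so Theorem~\ref{thm:main-1-simple} alone does not give the target $\ell_\infty$ bound at this $p$; that is precisely why the sparse-vector step is needed. Second, and more importantly, the tail $e^{-t^p\log k}$ in Theorem~\ref{thm:main-1-simple} bounds the probability that $\norm{\tilde d-d}_\infty$ exceeds a threshold that \emph{includes} the $\log^{1/p}k$ factor; it is not a per-coordinate bound at your threshold $ct\Delta$. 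The correct per-coordinate bound at threshold $\Theta(t\sigma)$ (via Lemma~\ref{lemma:spherecap}, conditioned on $\norm{x}_p\leq(2k/p)^{1/p}\sigma$) is roughly $\exp(-\Theta(t^p\log\log k))=\log^{-\Theta(t^p)}k$, not $k^{-t^p}$. Consequently the number $m$ of flagged coordinates is, with high probability, $\Theta(k/\mathrm{polylog}(k))$ --- nearly linear in $k$ --- and certainly not $o(1)$ in expectation for constant $t$; your later estimate $\Pr[m\geq 1]\leq k\cdot k^{-t^{\Theta(\log\log\log k)}}$ fails for the same reason.

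This matters because the sparse-vector budget $c_{SV}$ (the number of above-threshold answers it can handle) must be fixed in advance, and its noise scales as $\sqrt{c_{SV}\log(1/\delta_{SV})}\log(k/\beta_{SV})/\epsilon$; if you size the correction step for a ``small'' $m$ as proposed (noise ``calibrated to the small count $m$''), the budget is exhausted with overwhelming probability and the utility guarantee collapses, and calibrating noise to the realized random $m$ is itself not private without extra care. The paper's resolution is to set $c_{SV}=4k/\log^{2+2t}k$, check that $\sqrt{c_{SV}\log(1/\delta)}(\log k+\log^t k)/\epsilon\leq O(\sqrt{k\log(1/\delta)}/\epsilon)$, which sits below the target error, and show via Lemma~\ref{lemma:spherecap} plus a Chernoff bound over the $k$ independent coordinates that at most $c_{SV}$ coordinates exceed $\alpha_{SV}/2$ except with probability $e^{-\Omega(k/\log^{2+2t}k)}\leq e^{-2\log^t k}$; this Chernoff condition (needing $\log^{2+3t}k\lesssim k$), not SV noise growing with the threshold level, is what forces $t=O(\log k/\log\log k)$. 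Finally, for the expectation at $t=2$ you also need an unconditional control of the far tail, since the stated tail bound only holds for $t$ in the allowed range; the paper gets this by outputting $d$ whenever $\norm{x}_p^p>2k\sigma^p/p$ and invoking the unconditional Laplace-tail property of the sparse-vector output. With these repairs your outline becomes the paper's proof; as written, the step bounding the number of bad coordinates, and hence the sizing and accuracy of the correction step, does not hold.
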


Again, the lower bound on $\delta$ can easily be removed using the mechanism achieving \eqref{eq:su-pure}. This gives an exponential improvement on the multiplicative gap between the upper bound in \eqref{eq:su-apx} and the lower bound in \eqref{eq:su-lb}. We arrive at this result by improving upon Generalized Gaussian mechanisms in the same manner \cite{SteinkeU17} improves upon the Gaussian mechanism: After sampling $x$ from a Generalized Gaussian, we apply the sparse vector mechanism to $x$ to get $\tilde{x}$ which satisfies $\norm{x - \tilde{x}}_\infty \ll \norm{x}_\infty$. We then just output $\tilde{d} = d + x - \tilde{x}$.  The full analysis is given in Section~\ref{section:sv}. Similarly to \cite{SteinkeU17}, the major technical component is showing that at least $k / \log^{\Omega(1)} k$ entries of $x$ are small with high probability, which we do using the tail bounds derived in Section~\ref{section:lpp}. This is necessary for the sparse vector mechanism to satisfy that $\norm{x - \tilde{x}}_\infty$ is, roughly speaking, the $(k/\log^{\Omega(1)} k)$-th largest entry of $x$ rather than the largest entry with high probability.

\subsection{Preliminaries}

For completeness, we restate the noise distribution of interest here:

\begin{definition}
The (multivariate) \textbf{Generalized Gaussian distribution with shape $p$ and scale $\sigma$}, denoted $GGauss(p, \sigma)$, is the distribution over $x \in \mathbb{R}^k$ with probability distribution function (pdf) proportional to $\exp(-(\norm{x}_p/\sigma)^p)$. 
\end{definition}

\subsubsection{Sub-Gamma Random Variables}

The following facts about sub-gamma random variables will be useful in our analysis:

\begin{definition}
A random variable $X$ is \textbf{sub-gamma to the right} with variance $v$ and scale $c$ if:

$$\forall \lambda \in (0, 1/c): \ex[\exp(\lambda(X - \ex[X]))] \leq \exp \left(\frac{\lambda^2 v}{2(1 - c\lambda)}\right).$$

Here, we use the convention $1/c = \infty$ if $c = 0$. We denote the class of such random variables $\Gamma^+(v, c)$. Similarly, a random variable $X$ is \textbf{sub-gamma to the left} with variance $v$ and scale $c$, if $-X \in \Gamma^+(v, c)$, i.e.:

$$\forall \lambda \in (0, 1/c): \ex[\exp(\lambda(\ex[X] - X))] \leq \exp \left(\frac{\lambda^2 v}{2(1 - c\lambda)}\right).$$

We denote the class of such random variables $\Gamma^-(v, c)$.
\end{definition}

We refer the reader to \cite{BoucheronLM13} for a textbook reference for this definition and proofs of the following facts.

\begin{fact}\label{fact:subgammasum}
If for $i \in [n]$ we have a random variable $X_i \in \Gamma^+(v_i, c_i)$, then $X = \sum_{i \in [n]} X_i$ satisfies $X \in \Gamma^+(\sum_{i \in [n]} v_i, \max_{i \in [n]} c_i)$ (and the same relation holds for $\Gamma^-(v, c)$).
\end{fact}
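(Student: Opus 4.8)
The plan is to invoke independence of the $X_i$ (which I take to be the intended hypothesis, since otherwise the moment generating function of the sum need not factorize) to reduce the MGF bound for $X$ to a product of the per-coordinate bounds, and then to collapse the $n$ different denominators into a single worst-case denominator governed by $\max_i c_i$.

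Concretely, set $c = \max_{i \in [n]} c_i$ and $v = \sum_{i \in [n]} v_i$, and fix any $\lambda \in (0, 1/c)$. Since $c_i \leq c$ for every $i$, we have $1/c \leq 1/c_i$, so $\lambda$ lies in the admissible range $(0, 1/c_i)$ for each $i$ and the sub-gamma bound for $X_i$ applies. By independence,
$$\ex[\exp(\lambda(X - \ex[X]))] = \prod_{i \in [n]} \ex[\exp(\lambda(X_i - \ex[X_i]))] \leq \prod_{i \in [n]} \exp\!\left(\frac{\lambda^2 v_i}{2(1 - c_i\lambda)}\right) = \exp\!\left(\sum_{i \in [n]} \frac{\lambda^2 v_i}{2(1 - c_i\lambda)}\right).$$

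The remaining step is to verify $\sum_{i \in [n]} \frac{\lambda^2 v_i}{2(1 - c_i\lambda)} \leq \frac{\lambda^2 v}{2(1 - c\lambda)}$. Because $0 < 1 - c\lambda \leq 1 - c_i\lambda$ and $v_i \geq 0$, each summand satisfies $\frac{\lambda^2 v_i}{2(1-c_i\lambda)} \leq \frac{\lambda^2 v_i}{2(1-c\lambda)}$; summing over $i$ and using $\sum_{i \in [n]} v_i = v$ gives the bound, so $X \in \Gamma^+(v, c)$. The claim for $\Gamma^-$ follows by applying the $\Gamma^+$ case to $-X_1, \dots, -X_n$, whose sum is $-X$.

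There is no genuine obstacle here; the only points worth care are that independence is what licenses the factorization of the MGF, and that $\lambda$ must lie in the admissible range of \emph{all} $n$ per-coordinate bounds simultaneously — which is precisely why the scale of the sum is $\max_i c_i$ rather than an average, since monotonicity of $\lambda \mapsto 1/(1-c\lambda)$ in $c$ then lets the worst denominator dominate.
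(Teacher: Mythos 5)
Your proof is correct and is essentially the standard argument the paper relies on: the paper gives no proof of this fact, deferring to Boucheron--Lugosi--Massart, and your MGF-factorization-plus-worst-denominator argument is exactly the textbook route. You are also right to flag independence as a needed (and clearly intended, though unstated) hypothesis --- the fact is false without it, and the paper only ever applies it to sums of independent coordinates.
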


\begin{lemma}\label{lemma:gammatailbound}
If $X \in \Gamma^+(v, c)$ then for all $t> 0$:
$$\Pr[X > \ex[X] + \sqrt{2vt} + ct] \leq e^{-t}.$$
Similarly, if $X \in \Gamma^-(v, c)$ then for all $t> 0$:
$$\Pr[X < \ex[X] - \sqrt{2vt} - ct] \leq e^{-t}.$$
\end{lemma}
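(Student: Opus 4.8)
The plan is to prove this by the standard Cram\'er--Chernoff (exponential Markov) argument, optimizing the free parameter $\lambda$ explicitly. By the definition of $\Gamma^-(v,c)$, the left-tail statement is exactly the right-tail statement applied to $-X$ (since $\Pr[X < \ex[X] - s] = \Pr[-X > \ex[-X] + s]$), so it suffices to treat $X \in \Gamma^+(v,c)$; we may also assume $v, c > 0$, as $v = 0$ forces $X = \ex[X]$ almost surely (making the bound trivial) and $c = 0$ is the sub-Gaussian special case already covered by the computation below.

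First I would write, for any $\lambda \in (0, 1/c)$, by Markov's inequality applied to $\exp(\lambda(X - \ex[X]))$ together with the sub-gamma moment bound,
\[
\Pr\left[X > \ex[X] + s\right] \le e^{-\lambda s}\,\ex\!\left[e^{\lambda(X - \ex[X])}\right] \le \exp\!\left(\frac{\lambda^2 v}{2(1 - c\lambda)} - \lambda s\right)
\]
for every $s > 0$. Hence it remains to show that, with $s = \sqrt{2vt} + ct$, some choice of $\lambda \in (0,1/c)$ makes the exponent at most $-t$, i.e.\ $\lambda s - \frac{\lambda^2 v}{2(1-c\lambda)} \ge t$.

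The key step is to take $\lambda = \lambda^\star := \dfrac{\sqrt{2t}}{\sqrt v + c\sqrt{2t}}$, which lies in $(0, 1/c)$ since $c\sqrt{2t} < \sqrt v + c\sqrt{2t}$. For this value one computes $1 - c\lambda^\star = \dfrac{\sqrt v}{\sqrt v + c\sqrt{2t}}$, so that $\dfrac{(\lambda^\star)^2 v}{2(1-c\lambda^\star)} = \dfrac{t\sqrt v}{\sqrt v + c\sqrt{2t}}$; and writing $\sqrt{2vt} + ct = \sqrt{2t}\bigl(\sqrt v + \tfrac{c\sqrt{2t}}{2}\bigr)$ gives $\lambda^\star s = \dfrac{2t\sqrt v + tc\sqrt{2t}}{\sqrt v + c\sqrt{2t}}$. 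Subtracting, the exponent equals exactly $-t$, which yields the claimed bound $e^{-t}$. (This $\lambda^\star$ is precisely the maximizer of $\lambda \mapsto \lambda s - \frac{\lambda^2 v}{2(1-c\lambda)}$; equivalently, one is evaluating the Cram\'er transform of $\lambda \mapsto \frac{\lambda^2 v}{2(1-c\lambda)}$, which equals $\frac{v}{c^2}\bigl(1 + \tfrac{cs}{v} - \sqrt{1 + \tfrac{2cs}{v}}\bigr)$, at $s = \sqrt{2vt}+ct$.)

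The one subtlety I would be careful about is that one cannot afford the more familiar coarse bound $\Pr[X > \ex[X] + s] \le \exp\!\bigl(-\tfrac{s^2}{2(v+cs)}\bigr)$ coming from the non-optimal choice $\lambda = s/(v+cs)$: at $s = \sqrt{2vt}+ct$ one has $\tfrac{s^2}{2(v+cs)} = t\cdot\tfrac{2v + 2c\sqrt{2vt} + c^2 t}{2v + 2c\sqrt{2vt} + 2c^2 t} < t$, which is strictly too weak. Using the exact optimizer $\lambda^\star$ removes this loss and gives the stated clean constants; there is no deeper obstacle, as the whole lemma is a one-line optimization once the Chernoff step is set up.
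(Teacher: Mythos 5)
Your proof is correct: the Chernoff step plus the choice $\lambda^\star = \sqrt{2t}/(\sqrt{v}+c\sqrt{2t})$ makes the exponent exactly $-t$ (your algebra checks out), and the left tail does reduce to the right tail applied to $-X$. The paper does not prove this lemma itself but cites the textbook of Boucheron, Lugosi and Massart \cite{BoucheronLM13}, and your argument is precisely the standard Cram\'er--Chernoff proof given there, so you have reproduced the intended approach.
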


\begin{fact}
Let $X \sim Gamma(a)$, i.e. $X$ has pdf satisfying:

$$p(x) \propto x^{a-1}e^{-x}.$$

Then $X$ satisfies $X \in \Gamma^+(a, 1)$ and $X \in \Gamma^-(a, 0)$.
\end{fact}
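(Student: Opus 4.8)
The plan is to compute the moment generating function (MGF) of $X \sim Gamma(a)$ in closed form and compare it directly to the MGF condition in the definition of sub-gamma. Writing the pdf as $p(x) = x^{a-1}e^{-x}/\Gamma(a)$ on $(0,\infty)$, I would first record that $\ex[X] = a$ and that, for every $s < 1$, the substitution $y = (1-s)x$ together with the definition of $\Gamma(a)$ gives
$$\ex[e^{sX}] = \frac{1}{\Gamma(a)}\int_0^\infty x^{a-1} e^{-(1-s)x}\,\D{x} = (1-s)^{-a}.$$
Centering at the mean, this is equivalently $\ex[e^{s(X-a)}] = \exp(-as - a\ln(1-s))$ for $s < 1$.

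For the right-tail claim $X \in \Gamma^+(a,1)$, I need to show that for all $\lambda \in (0,1)$ we have $-a\lambda - a\ln(1-\lambda) \le \tfrac{\lambda^2 a}{2(1-\lambda)}$. Dividing by $a$, this reduces to the one-variable inequality $\varphi(\lambda) := \tfrac{\lambda^2}{2(1-\lambda)} + \lambda + \ln(1-\lambda) \ge 0$ on $(0,1)$, which I would verify by noting $\varphi(0)=0$ and that $\varphi'(\lambda)$ simplifies to $\tfrac{\lambda^2}{2(1-\lambda)^2} \ge 0$, so $\varphi$ is nondecreasing and hence nonnegative. For the left-tail claim $X \in \Gamma^-(a,0)$ the scale is $c=0$, so (using the convention $1/c = \infty$) I must show $\ex[e^{\lambda(a-X)}] \le \exp(\tfrac{\lambda^2 a}{2})$ for all $\lambda > 0$; here $\ex[e^{\lambda(a-X)}] = e^{\lambda a}(1+\lambda)^{-a}$ by the MGF formula with $s = -\lambda < 1$, so after dividing by $a$ it suffices to prove $\psi(\lambda) := \tfrac{\lambda^2}{2} - \lambda + \ln(1+\lambda) \ge 0$ for $\lambda > 0$, which follows the same way from $\psi(0) = 0$ and $\psi'(\lambda) = \tfrac{\lambda^2}{1+\lambda} \ge 0$.

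There is no conceptual obstacle here: the proof is just "compute the MGF, then check two scalar inequalities." The only place requiring any care is the simplification of the derivatives $\varphi'$ and $\psi'$ to manifestly nonnegative expressions, so that the $\varphi(0)=\psi(0)=0$ argument closes; I would present those simplifications explicitly.
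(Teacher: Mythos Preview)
Your proof is correct. The paper does not actually prove this fact; it simply cites \cite{BoucheronLM13} as a textbook reference, and the argument you give (explicit MGF $(1-s)^{-a}$, then the two elementary scalar inequalities verified via $\varphi'(\lambda)=\tfrac{\lambda^2}{2(1-\lambda)^2}$ and $\psi'(\lambda)=\tfrac{\lambda^2}{1+\lambda}$) is exactly the standard derivation one finds there.
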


\subsubsection{Other Probability Facts}
We will use the following standard fact to relate distributions of variables to the distributions of their powers:

\begin{fact}[Change of Variables for Powers]
Let $X$ be distributed over $(0, \infty)$ with pdf proportional to $f(x)$. Let $Y$ be the random variable $X^c$ for $c > 0$. Then $Y$ has pdf proportional to $y^{\frac{1}{c} - 1} f(y^{\frac{1}{c}})$.

\end{fact}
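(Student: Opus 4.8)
The plan is to apply the standard one-dimensional change-of-variables formula for densities, using that $x \mapsto x^c$ is a smooth increasing bijection of $(0,\infty)$ onto itself for $c > 0$. Write the pdf of $X$ as $g_X(x) = f(x)/Z$, where $Z = \int_0^\infty f(x)\,\D x \in (0,\infty)$ is the normalizing constant (the hypothesis that $X$ has a pdf proportional to $f$ is exactly the statement that this integral is finite and positive). The map $\phi(x) = x^c$ has inverse $\phi^{-1}(y) = y^{1/c}$ and derivative $\tfrac{d}{dy}\phi^{-1}(y) = \tfrac{1}{c} y^{1/c - 1}$, which is positive on $(0,\infty)$.

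The cleanest way to carry this out is via the cumulative distribution function. For $y > 0$, monotonicity of $\phi$ gives
$$\Pr[Y \leq y] = \Pr[X^c \leq y] = \Pr[X \leq y^{1/c}] = \int_0^{y^{1/c}} g_X(x)\,\D x.$$
Differentiating both sides in $y$ and using the chain rule, the pdf of $Y$ is
$$g_Y(y) = g_X\!\left(y^{1/c}\right) \cdot \frac{1}{c}\, y^{1/c - 1} = \frac{1}{cZ}\, y^{1/c - 1} f\!\left(y^{1/c}\right).$$
Since $\tfrac{1}{cZ}$ is a constant independent of $y$, this establishes that $g_Y(y) \propto y^{1/c - 1} f(y^{1/c})$ on $(0,\infty)$, as claimed; $Y$ is supported on $(0,\infty)$ because $X$ is.

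There is no real obstacle here — the only things to be careful about are that $\phi$ is a bijection with $\phi^{-1}$ differentiable (so the change of variables is valid and the Jacobian factor $\tfrac{1}{c}y^{1/c-1}$ appears), and that one should only claim proportionality rather than equality, since the normalizing constant for $g_Y$ is absorbed. One could equivalently phrase the argument as a substitution inside the integral $\Pr[Y \in A] = \int_A g_Y(y)\,\D y = \int_{\phi^{-1}(A)} g_X(x)\,\D x$ for measurable $A \subseteq (0,\infty)$, but the CDF computation above is the shortest route.
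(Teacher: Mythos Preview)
Your proof is correct; it is the standard CDF-based change-of-variables argument and there is nothing to criticize. The paper itself does not prove this statement---it is stated as a standard fact without proof---so there is no alternative approach to compare against.
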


Finally, we'll use the following standard tail bounds:

\begin{lemma}[Laplace Tail Bound]
Let $X$ be a Laplace random variable with scale $b$, $Lap(b)$. That is, $X$ has pdf proportional to $\exp(-|x|/b)$. Then we have $\Pr[|x| \geq tb] \leq e^{-t}$.
\end{lemma}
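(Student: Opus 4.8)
The plan is to prove this by direct computation, since the Laplace distribution has an explicit closed form. First I would pin down the normalizing constant: if $X$ has pdf proportional to $\exp(-|x|/b)$, then since $\int_{-\infty}^\infty \exp(-|x|/b)\,\D x = 2\int_0^\infty \exp(-x/b)\,\D x = 2b$, the true pdf is $\frac{1}{2b}\exp(-|x|/b)$. Next, using the symmetry of this density about $0$, I would write $\Pr[|X| \geq tb] = 2\Pr[X \geq tb]$ for $t \geq 0$, and then evaluate $\Pr[X \geq tb] = \int_{tb}^\infty \frac{1}{2b}\exp(-x/b)\,\D x = \frac{1}{2}\exp(-t)$ by the fundamental theorem of calculus. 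Multiplying by $2$ gives $\Pr[|X| \geq tb] = e^{-t}$, which in particular is $\leq e^{-t}$, as claimed. For completeness I would note that when $t < 0$ the statement is vacuous, since then $e^{-t} > 1 \geq \Pr[|X| \geq tb]$.

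There is essentially no obstacle here: the only point requiring any care is getting the normalizing constant right (a factor of $2b$ rather than $b$), and observing that the inequality in the statement is in fact an equality for $t \geq 0$. No probabilistic machinery beyond a one-dimensional integral is needed, so this lemma is included only as a convenient reference for later sections.
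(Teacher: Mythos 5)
Your computation is correct: the normalizing constant $\frac{1}{2b}$, the symmetry reduction, and the tail integral giving exactly $e^{-t}$ are all right. The paper states this lemma as a standard fact without proof, and your direct one-line integration is precisely the standard argument one would supply, so there is nothing to compare or correct.
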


\begin{lemma}[Chernoff Bound]\label{lemma:chernoffna}
Let $X_1, X_2, \ldots X_k$ be independent Bernoulli random variables. Let $\mu = \ex\left[\sum_{i \in [k]} X_i\right]$. Then for $t \in [0, 1]$, we have:
$$\Pr\left[\sum_{i \in [k]} X_i \geq (1+t) \mu\right] \leq \exp\left(-\frac{t^2 \mu}{3}\right).$$
\end{lemma}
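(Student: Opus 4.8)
The plan is to prove this via the standard exponential moment (Bernstein--Chernoff) method. Write $S = \sum_{i \in [k]} X_i$ and let $p_i = \ex[X_i]$, so $\mu = \sum_i p_i$. For any $\lambda > 0$, Markov's inequality applied to the nonnegative random variable $e^{\lambda S}$ gives $\Pr[S \geq (1+t)\mu] \leq e^{-\lambda(1+t)\mu} \ex[e^{\lambda S}]$. Since the $X_i$ are independent, $\ex[e^{\lambda S}] = \prod_{i \in [k]} \ex[e^{\lambda X_i}]$, and for each Bernoulli variable $\ex[e^{\lambda X_i}] = 1 + p_i(e^\lambda - 1) \leq \exp(p_i(e^\lambda - 1))$ using the inequality $1 + x \leq e^x$. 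Multiplying these bounds yields $\ex[e^{\lambda S}] \leq \exp(\mu(e^\lambda - 1))$, hence $\Pr[S \geq (1+t)\mu] \leq \exp\bigl(\mu(e^\lambda - 1) - \lambda(1+t)\mu\bigr)$ for every $\lambda > 0$.

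Next I would optimize the exponent over $\lambda$. The function $\lambda \mapsto (e^\lambda - 1) - \lambda(1+t)$ is minimized at $\lambda = \ln(1+t)$ (which is positive for $t \in (0,1]$), and plugging this in gives the clean bound
$$\Pr[S \geq (1+t)\mu] \leq \left(\frac{e^t}{(1+t)^{1+t}}\right)^{\mu}.$$
It then remains to show $\frac{e^t}{(1+t)^{1+t}} \leq e^{-t^2/3}$ for $t \in [0,1]$, i.e. that $g(t) := (1+t)\ln(1+t) - t - t^2/3 \geq 0$ on $[0,1]$.

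For this final analytic step I would argue by differentiation: $g(0) = 0$ and $g'(t) = \ln(1+t) - \tfrac{2t}{3}$, which also vanishes at $t = 0$; then $g''(t) = \tfrac{1}{1+t} - \tfrac{2}{3}$ is nonnegative on $[0, \tfrac12]$ and negative on $[\tfrac12, 1]$, so $g'$ increases then decreases on $[0,1]$, and since $g'(0) = 0$ and $g'(1) = \ln 2 - \tfrac23 > 0$ we get $g'(t) \geq 0$ throughout $[0,1]$. Hence $g$ is nondecreasing on $[0,1]$ with $g(0)=0$, so $g \geq 0$ there, completing the proof. The main (and only nonroutine) obstacle is verifying this last elementary inequality between $(1+t)\ln(1+t)$ and $t + t^2/3$; everything before it is a direct and standard computation. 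If $t = 0$ the statement is trivial, so we may assume $t \in (0,1]$ throughout.
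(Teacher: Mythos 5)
Your proof is correct: the exponential-moment argument, the optimization at $\lambda = \ln(1+t)$, and the elementary verification that $(1+t)\ln(1+t) - t - t^2/3 \geq 0$ on $[0,1]$ (via $g(0)=g'(0)=0$, the sign change of $g''$ at $t=1/2$, and $g'(1)=\ln 2 - 2/3 > 0$) are all sound. The paper states this Chernoff bound as a standard fact without proof, and what you give is exactly the canonical textbook derivation, so there is nothing to reconcile.
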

\section{Generalized Gaussian Mechanisms}\label{section:lpp}

In this section, we analyze the Generalized Gaussian mechanism that given database $d$, samples $x \sim GGauss(p, \sigma)$ and outputs $\tilde{d} = d + x$. We denote this mechanism $\mpq_\sigma$. When $p = 1$ this is the Laplace mechanism, and when $p = 2$ this is the Gaussian mechanism. 

\subsection{Privacy Guarantees}
We first determine what $\sigma$ is needed to make this mechanism private. We start with the following lemma, which gives a tail bound on the change in the ``utility'' function $\norm{\tilde{d} - d}_p^p$ if $d$ changes by $\Delta \in [-1, 1]^k$:

\begin{lemma}\label{lemma:lppdifference}
Let $x \in \mathbb{R}^k$ be sampled from $GGauss(p, \sigma)$. Then for $4 \leq p \leq \log k$ that is an even integer, $\delta \in [2^{-O(k / p)}, 1/k]$, and any $\Delta \in [-1, 1]^k$ we have with probability $1 - \delta$, for a sufficiently large constant $c$:

$$\norm{x - \Delta}_p^p - \norm{x}_p^p \leq cp \left[k^{1/p-1/2} \sqrt{p \log (1/\delta)} \norm{x}_p^{p-1} + 2k^{\frac{p}{2}}p\right]$$
\end{lemma}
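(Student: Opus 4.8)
The plan is to expand $\norm{x-\Delta}_p^p - \norm{x}_p^p$ with the binomial theorem and bound the pieces by concentration. Since $p$ is even, $|x_j|^p = x_j^p$ and $|x_j - \Delta_j|^p = (x_j-\Delta_j)^p$, so $\norm{x-\Delta}_p^p - \norm{x}_p^p = \sum_{j=1}^k\sum_{i=1}^p\binom pi(-\Delta_j)^i x_j^{p-i}$; I would treat the $i=1$ term $-p\sum_j\Delta_j x_j^{p-1}$ as the main term and the $i\ge 2$ terms as lower order. The structural observation I would lean on is that $GGauss(p,\sigma)$ is a product distribution — its density is $\prod_j\exp(-|x_j|^p/\sigma^p)$ — so the coordinates $x_j$ are i.i.d. and symmetric, and by the change-of-variables-for-powers fact $|x_j|^p/\sigma^p\sim Gamma(1/p)$. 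Hence $|x_j|^{p-1} = \sigma^{p-1}G_j^{(p-1)/p}$ for i.i.d. $G_j\sim Gamma(1/p)$, i.e.\ $\sigma^{p-1}$ times a Generalized Gamma; by the sub-gamma bound for Generalized Gammas proved in Section~\ref{section:generalizedgammas} together with rescaling, $|x_j|^{p-1}$ is sub-gamma to the right with variance $\Theta(\sigma^{2p-2}/p)$ and scale $O(\sigma^{p-1})$.

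The main term is the crux. Since $p-1$ is odd and $x_j$ is symmetric, each $-p\Delta_j x_j^{p-1}$ is symmetric and mean-zero, with magnitude at most $p|x_j|^{p-1}$ (using $|\Delta_j|\le 1$); and the symmetrization of a nonnegative sub-gamma variable is two-sided sub-gamma with the same scale and with the squared mean folded into the variance, so $-p\Delta_j x_j^{p-1}\in\Gamma^+(\Theta(p\sigma^{2p-2}),O(p\sigma^{p-1}))\cap\Gamma^-(\Theta(p\sigma^{2p-2}),O(p\sigma^{p-1}))$. By Fact~\ref{fact:subgammasum} the sum over $j$ lies in $\Gamma^+(\Theta(pk\sigma^{2p-2}),O(p\sigma^{p-1}))$, so Lemma~\ref{lemma:gammatailbound} gives $-p\sum_j\Delta_j x_j^{p-1}\le O(\sqrt{pk\log(1/\delta)}\,\sigma^{p-1}) + O(p\sigma^{p-1}\log(1/\delta))$ with probability $1-\delta/4$. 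To put this in the $\norm{x}_p^{p-1}$ form of the statement I would prove the matching lower bound on $\norm{x}_p$: $\norm{x}_p^p = \sigma^p\sum_j G_j$ with $\sum_j G_j\sim Gamma(k/p)\in\Gamma^-(k/p,0)$, and Lemma~\ref{lemma:gammatailbound} together with the hypothesis $\delta\ge 2^{-O(k/p)}$ (which is exactly what makes $\sqrt{(k/p)\log(1/\delta)}\le (k/p)/2$) gives $\norm{x}_p^p\ge\sigma^p k/(2p)$ with probability $1-\delta/4$, i.e.\ $\sigma^{p-1}=O(p\,k^{1/p-1}\norm{x}_p^{p-1})$. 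Substituting turns the first term into $O(p\cdot k^{1/p-1/2}\sqrt{p\log(1/\delta)}\,\norm{x}_p^{p-1})$, which is $cp$ times the first bracketed term for $c$ large enough; and the scale term $O(p\sigma^{p-1}\log(1/\delta))$ is dominated by it since $\log(1/\delta)=O(k/p)$.

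The lower-order terms are routine. Using $|\Delta_j|\le 1$, $\sum_j\sum_{i=2}^p\binom pi(-\Delta_j)^i x_j^{p-i}\le\sum_j\sum_{i=2}^p\binom pi|x_j|^{p-i}$, and the $i=2$ term $\binom p2\sum_j\Delta_j^2 x_j^{p-2}\le\binom p2\norm{x}_{p-2}^{p-2}$ dominates: for any coordinate with $|x_j|\ge 2p$ the tail $\sum_{i\ge 3}\binom pi|x_j|^{p-i}$ is a geometric-type sum at most a constant fraction of $\binom p2|x_j|^{p-2}$, while a coordinate with $|x_j|<2p$ contributes at most $(1+2p)^p$ to the inner sum, so the whole thing is at most $O(p\,k\,\sigma^{p-2}) + k(3p)^p$ once we bound $\norm{x}_{p-2}^{p-2}=\sigma^{p-2}\sum_j G_j^{(p-2)/p}$ by $O(k\sigma^{p-2}/p)$ with probability $1-\delta/4$ (again a sub-gamma-sum bound invoking $\delta\ge 2^{-O(k/p)}$). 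Finally I would case on $\sigma$: if $\sigma\ge\sqrt k$, then on the lower-tail event $\norm{x}_p^{p-1}$ is large enough that $O(p\,k\,\sigma^{p-2})$ is dominated by the main-term bound $O(\sqrt{pk\log(1/\delta)}\,\sigma^{p-1})$ (this needs $p\le\log(1/\delta)$, which holds as $p\le\log k\le\log(1/\delta)$); if $\sigma<\sqrt k$, then $O(p\,k\,\sigma^{p-2})\le O(p\,k^{p/2})\le 2cp^2k^{p/2}$, the second bracketed term times $cp$; and $k(3p)^p\le 2cp^2k^{p/2}$ as well since $p\le\log k$ (and $k$ exceeds an absolute constant). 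A union bound over the at most four failure events completes the proof.

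The main obstacle is the second paragraph: the $k^{1/p-1/2}\sqrt p$ shape of the stated bound emerges only if the sub-gamma variance of the main term is pinned down to exactly $\Theta(pk\sigma^{2p-2})$ and then converted through the $Gamma(k/p)$ lower-tail estimate — naive alternatives, such as bounding $\sum_j|\Delta_j x_j^{p-1}|$ directly (which costs a factor $\sim\sqrt k$) or bounding $\norm{x}_{2p-2}$ by $\norm{x}_p$ in an $\ell_2$-type argument (lossy when the coordinates are spread out), would only give the weaker $O(\sqrt{k\log(1/\delta)}\,\norm{x}_p^{p-1})$. So the argument must genuinely use both the independence of the coordinates and the mean-zero cancellation coming from the symmetry of the odd power $x_j^{p-1}$, and the hypothesis $\delta\ge 2^{-O(k/p)}$ must be used in each of the conversions (main term, $\norm{x}_{p-2}^{p-2}$ bound, and absorbing the scale contributions).
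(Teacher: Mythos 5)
Your proposal is correct, but it takes a genuinely different route from the paper's proof. The paper never expands binomially: it bounds $\norm{x-\Delta}_p^p-\norm{x}_p^p\le p\sum_i(x_i+1)^{p-1}$ via a derivative/integral estimate (which absorbs all higher-order terms in one shot), then splits coordinates by magnitude (a trichotomy producing the additive $2k^{p/2}p$ from the ``middle'' coordinates $|x_i|\le\sqrt k$), and extracts the crucial cancellation not by symmetrizing each summand but by a Chernoff bound on the number of positive coordinates combined with sub-gamma concentration of $\sum_{x_i<0}|x_i|^{p-1}$ and $\sum_{x_i\ge0}|x_i|^{p-1}$ around their means; the conversion to $\norm{x}_p^{p-1}$ goes through concentration of $\norm{x}_{p-1}^{p-1}$ plus $\norm{x}_{p-1}^{p-1}\le k^{1/p}\norm{x}_p^{p-1}$, whereas you go through the $Gamma(k/p)$ lower tail of $\norm{x}_p^p$ directly (both conversions are equivalent in strength and both use $\delta\ge 2^{-O(k/p)}$ and $p\le\log k\le\log(1/\delta)$ exactly where you say). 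Your route makes the mean-zero cancellation more transparent, but it requires two ingredients not in the paper's stated toolkit that you should supply: (i) the symmetrization claim --- that for a nonnegative variable in $\Gamma^+(v,c)$ with mean $\mu$, the sign-randomized version is two-sided sub-gamma with variance $O(v+\mu^2)$ and scale $c$ --- which does hold (bound $\ex[e^{-\lambda Y}]\le e^{-\lambda\mu+\lambda^2(v+\mu^2)/2}$ using $e^{-z}\le 1-z+z^2/2$ for $z\ge0$, then $\tfrac12 e^{\lambda\mu+A}+\tfrac12 e^{-\lambda\mu+B}\le e^{\max(A,B)}\cosh(\lambda\mu)$), plus independence of sign and magnitude; and (ii) the bound $\norm{x}_{p-2}^{p-2}=O(k\sigma^{p-2}/p)$, for which the appendix's sub-gamma lemma (stated only for $GGamma(\tfrac{1}{p-1},\tfrac{p}{p-1})$) does not literally apply, but which follows from H\"older, $\norm{x}_{p-2}^{p-2}\le k^{2/p}\norm{x}_p^{p-2}$, together with the already-needed Gamma upper tail $\norm{x}_p^p\le 2k\sigma^p/p$ and $p^{2/p}\le 2$. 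With those two short lemmas filled in, your factor accounting (main term $O(p\,k^{1/p-1/2}\sqrt{p\log(1/\delta)})\norm{x}_p^{p-1}$, scale term absorbed since $p\log(1/\delta)=O(k)$, lower-order terms absorbed into either the main term when $\sigma\ge\sqrt k$ or into $2cp^2k^{p/2}$ when $\sigma<\sqrt k$) checks out and yields the stated bound.
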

We remark that the requirement that $p$ be an even integer can be dropped by generalizing the proofs in this section appropriately. However, we can reduce proving Theorem~\ref{thm:main-1-simple} for all $p$ to proving it for only even $p$ by rounding $p$ up to the nearest even integer, and only considering even $p$ simplifies the presentation. So, we stick to considering only even $p$.

\begin{proof}

By symmetry of $GGauss(p, \sigma)$ we can assume $\Delta$ has all negative entries. Then we have:

$$\norm{x - \Delta}_p^p - \norm{x}_p^p = \sum_{i = 1}^{k} ((x_i - \Delta_j)^p - x_i^p) $$
$$= \sum_{i = 1}^{k} \int_{x_i}^{x_i - \Delta} py^{p-1} \D{y} \leq \sum_{i = 1}^{k} \int_{x_i}^{x_i - \Delta} p(x_i - \Delta_i)^{p-1} \D{y} \leq p \sum_{i = 1}^{k} (x_i - \Delta_i)^{p-1} \leq p \sum_{i = 1}^{k} (x_i + 1)^{p-1}.$$

We want to replace the terms $(x_i + 1)^{p-1}$ with terms $x_i^{p-1}$ since the latter's distribution is more easily analyzed. To do so, we use the following observation:

\begin{fact}\label{fact:conversion}
If $p \leq \sqrt{k}/2$:
\begin{itemize}
    \item If $x_i > \sqrt{k}$, then we have $(x_i + 1)^{p-1} \leq \left(1 + \frac{1}{\sqrt{k}}\right)^{p-1} x_j^{p-1} \leq \left(1 + \frac{2p}{\sqrt{k}}\right) x_j^{p-1}$.
    \item If $|x_i| \leq \sqrt{k}$, then we have $(x_i + 1)^{p-1} - x_i^{p-1} \leq (\sqrt{k} + 1)^{p-1} - \sqrt{k}^{p-1} \leq 2k^{\frac{p}{2}-1}p$.
    \item If $x_i < -\sqrt{k}$, then we have $(x_i + 1)^{p-1} \leq \left(1 - \frac{1}{\sqrt{k}}\right)^{p-1} x_j^{p-1} \leq \left(1 - \frac{2p}{\sqrt{k}}\right) x_j^{p-1}$.
\end{itemize}
\end{fact}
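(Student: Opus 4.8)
The plan is to verify the three bullets one at a time; each is a short computation, and the only genuinely nontrivial ingredient is a monotonicity/convexity argument in the second bullet. The recurring tool is a pair of estimates valid under the hypothesis $p \le \sqrt{k}/2$ (which forces $(p-1)/\sqrt{k} \le 1/2$): on the one hand $\bigl(1+\tfrac{1}{\sqrt{k}}\bigr)^{p-1} \le e^{(p-1)/\sqrt{k}} \le 1 + \tfrac{p-1}{\sqrt{k}} + \bigl(\tfrac{p-1}{\sqrt{k}}\bigr)^2 \le 1 + \tfrac{2p}{\sqrt{k}}$ (using $e^z \le 1+z+z^2$ for $z \le 1$), and on the other hand $\bigl(1-\tfrac{1}{\sqrt{k}}\bigr)^{p-1} \ge 1 - \tfrac{p-1}{\sqrt{k}} \ge 1 - \tfrac{2p}{\sqrt{k}}$ by Bernoulli's inequality.

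For the first bullet, since $x_i > \sqrt{k} > 0$ I would factor $x_i + 1 = x_i(1 + 1/x_i)$ with $0 < 1/x_i < 1/\sqrt{k}$, raise to the $(p-1)$st power to get $(x_i+1)^{p-1} \le (1+\tfrac{1}{\sqrt{k}})^{p-1} x_i^{p-1}$, and then apply the first estimate above; I would keep the intermediate bound $(x_i+1)^{p-1} \le (1+\tfrac{1}{\sqrt{k}})^{p-1} x_i^{p-1}$ on hand for reuse. For the third bullet, since $x_i < -\sqrt{k}$ I would substitute $x_i = -y$ with $y > \sqrt{k}$; because $p-1$ is odd this gives $(x_i+1)^{p-1} = -(y-1)^{p-1}$ and $x_i^{p-1} = -y^{p-1}$, so the claimed inequality $(x_i+1)^{p-1} \le (1-\tfrac{1}{\sqrt{k}})^{p-1}x_i^{p-1}$ is equivalent, after cancelling signs and taking $(p-1)$st roots, to $1 - \tfrac{1}{y} \ge 1 - \tfrac{1}{\sqrt{k}}$, which holds since $y \ge \sqrt{k}$. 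The final step $(1-\tfrac{1}{\sqrt{k}})^{p-1} x_i^{p-1} \le (1-\tfrac{2p}{\sqrt{k}})x_i^{p-1}$ then follows from the second estimate above together with the fact that $x_i^{p-1} \le 0$, which reverses the direction of the inequality when both sides are multiplied by it.

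The second bullet is the one place that needs a real argument. Consider $g(x) := (x+1)^{p-1} - x^{p-1}$ on $[-\sqrt{k},\sqrt{k}]$. Since $p$ is even, $p-2$ is even, so $g'(x) = (p-1)\bigl((x+1)^{p-2} - x^{p-2}\bigr) = (p-1)\bigl(|x+1|^{p-2} - |x|^{p-2}\bigr)$ has the sign of $|x+1| - |x|$; hence $g$ is decreasing on $[-\sqrt{k}, -\tfrac12]$ and increasing on $[-\tfrac12, \sqrt{k}]$, so $\max_{|x| \le \sqrt{k}} g(x) = \max\{g(-\sqrt{k}), g(\sqrt{k})\}$. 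Using again that $p-1$ is odd, $g(\sqrt{k}) - g(-\sqrt{k}) = (\sqrt{k}+1)^{p-1} + (\sqrt{k}-1)^{p-1} - 2k^{(p-1)/2} \ge 0$ by (midpoint) convexity of $t \mapsto t^{p-1}$ on $(0,\infty)$, so the maximum is $g(\sqrt{k}) = (\sqrt{k}+1)^{p-1} - \sqrt{k}^{p-1}$, which is the first inequality claimed in the bullet. For the numerical bound I would then write $g(\sqrt{k}) = k^{(p-1)/2}\bigl[(1+\tfrac{1}{\sqrt{k}})^{p-1} - 1\bigr] \le k^{(p-1)/2}\cdot\tfrac{2p}{\sqrt{k}} = 2k^{p/2-1}p$, using the intermediate bound recorded in the first bullet.

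The only real obstacle is this second bullet, specifically showing that $g$ attains its maximum on $[-\sqrt{k},\sqrt{k}]$ at the \emph{right} endpoint rather than the left; the parity of $p$ (so that $x^{p-1}$ is odd and increasing and $x^{p-2}$ is even) is exactly what makes both the monotonicity computation and the endpoint comparison work, which is also why the statement naturally sits inside the even-$p$ proof of Lemma~\ref{lemma:lppdifference}. If one only wants the numerical bound $2k^{p/2-1}p$ (which is all the proof of Lemma~\ref{lemma:lppdifference} uses), the endpoint analysis can be bypassed: by the mean value theorem $(x+1)^{p-1} - x^{p-1} = (p-1)\xi^{p-2}$ for some $\xi$ between $x$ and $x+1$, and $|\xi|^{p-2} \le (\sqrt{k}+1)^{p-2} \le k^{(p-2)/2}(1+\tfrac{2p}{\sqrt{k}}) \le 2k^{(p-2)/2}$, so $(x+1)^{p-1}-x^{p-1} \le 2(p-1)k^{p/2-1} \le 2pk^{p/2-1}$. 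Everything else is sign bookkeeping.
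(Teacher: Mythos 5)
Your proof is correct, and it is worth noting that the paper itself offers no argument at all for this statement: Fact~\ref{fact:conversion} is simply asserted as an observation inside the proof of Lemma~\ref{lemma:lppdifference}, so your write-up supplies a verification the authors left implicit rather than deviating from an existing one. The first and third bullets are exactly the sign bookkeeping you describe (factoring out $x_i$, using $(1+\tfrac{1}{\sqrt{k}})^{p-1}\le 1+\tfrac{2p}{\sqrt{k}}$ and Bernoulli, and remembering that $x_i^{p-1}<0$ flips the second inequality in the third bullet), and your use of the parity of $p$ (so $t\mapsto t^{p-1}$ is odd and $t\mapsto t^{p-2}$ is even) is precisely the hypothesis under which the fact is invoked, since Lemma~\ref{lemma:lppdifference} restricts to even $p$. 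Your handling of the middle bullet is the only place real work happens, and both of your arguments check out: the monotonicity-plus-endpoint-comparison argument (with midpoint convexity showing $g(\sqrt{k})\ge g(-\sqrt{k})$) establishes the intermediate bound $(x_i+1)^{p-1}-x_i^{p-1}\le(\sqrt{k}+1)^{p-1}-\sqrt{k}^{\,p-1}$ exactly as stated in the fact, while the mean-value-theorem shortcut gives the final bound $2k^{p/2-1}p$ directly, which is all the downstream calculation in Lemma~\ref{lemma:lppdifference} actually consumes. The numerical step $(\sqrt{k}+1)^{p-1}-k^{(p-1)/2}=k^{(p-1)/2}\bigl[(1+\tfrac{1}{\sqrt{k}})^{p-1}-1\bigr]\le 2k^{p/2-1}p$ is also fine under the hypothesis $p\le\sqrt{k}/2$. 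No gaps.
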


Fact~\ref{fact:conversion} gives:

$$\sum_{i = 1}^{k} (x_i+1)^{p-1} \leq \left(1 - \frac{2p}{\sqrt{k}}\right)\sum_{i: x_i < 0} x_i^{p-1} + \left(1 + \frac{2p}{\sqrt{k}}\right)\sum_{i: x_i \geq 0} x_i^{p-1} + 2k^{\frac{p}{2}}p.$$

It now suffices to show that for some sufficiently large constant $c$:

\begin{equation}\label{eq:sufficientcondition}
-\left(1 - \frac{2p}{\sqrt{k}}\right)\sum_{i: x_i < 0} |x_i|^{p-1} + \left(1 + \frac{2p}{\sqrt{k}}\right)\sum_{i: x_i \geq 0} |x_i|^{p-1} \leq ck^{1/p-1/2} \sqrt{p\log (1/\delta)} \norm{x}_p^{p-1},
\end{equation}

with probability at least $1-\delta$. Note that each $x_i$ is sampled independently with probability proportional to $\exp(-(|x_i|/\sigma)^p)$. Since multiplying $x$ by a constant does not affect whether \eqref{eq:sufficientcondition} holds, it suffices to show \eqref{eq:sufficientcondition} when each $x_i$ is independently sampled with probability proportional to $\exp(-|x_i|^p)$, i.e. when $\sigma = 1$.
By change of variables, $y_i = |x_i|^{p-1}$ is sampled from the distribution with pdf proportional to $y_i^{\frac{1}{p-1} - 1} \exp(-y_i^{\frac{p}{p-1}})$. This is the Generalized Gamma random variable with parameters $(\frac{1}{p-1}, \frac{p}{p-1})$, which we denote $GGamma(\frac{1}{p-1}, \frac{p}{p-1})$. We show the following property of this random variable in Appendix~\ref{section:generalizedgammas}:

\begin{lemma}\label{lemma:ggconcentration}
For any $p \geq 4$, let $Y$ be the random variable $GGamma(\frac{1}{p-1}, \frac{p}{p-1})$, let $\mu = \ex[Y]$. Then $\mu \in [1/p, 1.2/p), Y \in \Gamma^+(\mu, 1)$, and $Y \in \Gamma^-(\mu, 3/2)$. 
\end{lemma}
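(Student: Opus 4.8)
The plan is to reduce everything to the ordinary Gamma distribution. By the change-of-variables fact, if $Y \sim GGamma(\frac{1}{p-1}, \frac{p}{p-1})$ then $Z := Y^{p/(p-1)}$ has pdf proportional to $z^{1/p - 1}e^{-z}$, so $Z \sim Gamma(1/p)$ and $Y = Z^{1 - 1/p}$. Since $\ex[Z^t] = \Gamma(1/p + t)/\Gamma(1/p)$ for $Z \sim Gamma(1/p)$, this gives all integer moments of $Y$ in closed form: $\ex[Y^n] = \Gamma(n - (n-1)/p)/\Gamma(1/p)$ for $n \ge 1$. In particular $\mu = \ex[Y] = \Gamma(1)/\Gamma(1/p) = 1/\Gamma(1/p) = 1/(p\,\Gamma(1 + 1/p))$, and hence $\ex[Y^n] = \mu\,\Gamma(n - (n-1)/p)$ for all integer $n \ge 1$.

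To bound $\mu$: by convexity of $\log\Gamma$ and $\psi(1) = -\gamma$ (digamma at $1$), $\log\Gamma(1 + x) \ge -\gamma x$, so $\Gamma(1 + 1/p) \ge e^{-\gamma/p} \ge e^{-\gamma/4} > 5/6$ for $p \ge 4$; also $\Gamma(1 + x) < 1$ on $(0,1)$. Together these give $\mu \in (1/p, \frac{6}{5p}) \subseteq [1/p, 1.2/p)$. The only further fact I need about $\Gamma$ is that it is increasing on $[3/2, \infty)$ (its minimum on $(0,\infty)$ sits near $1.46$); since $3/2 < n - (n-1)/p \le n$ for $p \ge 4$ and $n \ge 2$, it follows that $\Gamma(n - (n-1)/p) \le \Gamma(n) = (n-1)!$, so $\ex[Y^n] \le \mu\,(n-1)!$ for every integer $n \ge 2$ (and, for $n=2$ alone, $\ex[Y^2] = \mu\,\Gamma(2 - 1/p) < \mu$).

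The right-tail bound $Y \in \Gamma^+(\mu, 1)$ then falls out of the moment generating function series: for $\lambda \in (0,1)$,
$$\ex[e^{\lambda Y}] = 1 + \lambda\mu + \sum_{n \ge 2}\frac{\lambda^n}{n!}\ex[Y^n] \le 1 + \lambda\mu + \mu\sum_{n \ge 2}\frac{\lambda^n}{n} \le 1 + \lambda\mu + \frac{\mu\lambda^2}{2(1 - \lambda)},$$
and multiplying by $e^{-\lambda\mu}$ and using $1 + t \le e^t$ gives $\ex[e^{\lambda(Y - \mu)}] \le \exp(\frac{\mu\lambda^2}{2(1 - \lambda)})$. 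For the left tail, to avoid the sign bookkeeping of a direct central-moment expansion, I would exploit that $Y \ge 0$: the variable $U := \mu - Y$ is bounded above by $\mu$ with mean $0$, so Bennett's inequality for bounded-above variables gives $\ex[e^{\lambda U}] \le \exp(\frac{\mathrm{Var}(Y)}{\mu^2}(e^{\lambda\mu} - \lambda\mu - 1))$. Using $e^x - x - 1 \le \frac{x^2}{2(1 - x/3)}$ for $0\le x<3$ and $\mathrm{Var}(Y) < \ex[Y^2] = \mu\,\Gamma(2 - 1/p) < \mu$, this simplifies to $\ex[e^{\lambda(\mu - Y)}] \le \exp(\frac{\mathrm{Var}(Y)\,\lambda^2}{2(1 - \lambda\mu/3)}) \le \exp(\frac{\mu\lambda^2}{2(1 - \frac32\lambda)})$ for $\lambda \in (0, 2/3)$ (using $\mu \le 9/2$), i.e.\ $Y \in \Gamma^-(\mu, 3/2)$.

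The main obstacle is really just Gamma-function bookkeeping — fixing the monotonicity intervals and the crude estimate $\Gamma(1 + 1/p) \in (5/6, 1)$ — together with the observation that the left-tail bound wants the ``bounded above'' structure of $\mu - Y$ rather than a naive moment expansion; once those are in hand both MGF computations are routine.
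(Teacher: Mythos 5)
Your proof is correct, but it takes a genuinely different route from the paper's, most notably on the left tail. Both arguments start from the same closed-form moments $\ex[Y^n]=\Gamma(n-(n-1)/p)/\Gamma(1/p)$ and the monotonicity of $\Gamma$ on $[3/2,\infty)$. For the right tail the difference is cosmetic: the paper compares the MGF term-by-term with that of a $Gamma(\mu)$ variable and inherits $\Gamma^+(\mu,1)$ from it, while you bound $\ex[Y^n]\leq\mu\,(n-1)!$ directly (valid since $n-(n-1)/p\geq 7/4>3/2$ for $p\geq 4$, $n\geq 2$) and sum the geometric series; the two are essentially equivalent, yours being slightly more self-contained. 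For the left tail the paper pairs consecutive terms of the alternating MGF series, again compares with $Gamma(\mu)$, and leans on several ``verify analytically'' monotonicity and limit claims involving the digamma function (e.g.\ $g(2)\approx 0.6672$), which is the delicate part of its proof. You instead exploit that $U=\mu-Y$ is mean-zero and bounded above by $\mu$, apply the single-variable Bennett MGF lemma (which is in the same Boucheron--Lugosi--Massart reference the paper already cites), and then use $\mathrm{Var}(Y)\leq\ex[Y^2]=\mu\,\Gamma(2-1/p)<\mu$, the inequality $e^x-x-1\leq x^2/(2(1-x/3))$, and $\mu\leq 1.2/p\leq 0.3\leq 9/2$ to land exactly on the $\Gamma^-(\mu,3/2)$ bound for $\lambda\in(0,2/3)$. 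This buys a cleaner, numerics-free left-tail argument; what it costs is reliance on the Bennett lemma as an external fact, whereas the paper's computation, though heavier, stays within its own series manipulation. As a bonus, your tangent-line bound $\log\Gamma(1+x)\geq-\gamma x$ together with $\Gamma(1+x)<1$ on $(0,1)$ actually proves the mean estimate $\mu\in[1/p,1.2/p)$ (for $p\geq 4$), which the paper only asserts as Fact~\ref{fact:ggmean}.
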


Let $k'$ be the number of positive coordinates in $x$. A Chernoff bound gives that $k' \leq \frac{k}{2} + 3\sqrt{k\log\frac{1}{\delta}}$ with probability $1 - \delta/3$. By Lemma~\ref{lemma:ggconcentration} and Fact~\ref{fact:subgammasum} $\sum_{i: x_i < 0} |x_i|^{p-1}$ is in $\Gamma^-((k-k')\mu, 3/2)$ and $\sum_{i: x_i \geq 0} |x_i|^{p-1}$ is in $\Gamma^+(k'\mu, 1)$ for $\mu$ as defined in Lemma~\ref{lemma:ggconcentration}. We now apply Lemma~\ref{lemma:gammatailbound} with $t = \log(6/\delta)$ to each sum. Since $\delta \geq 2^{-O(k/\sqrt{p})}$, $\log (6/\delta) = O(\sqrt{k \log(1/\delta)/p})$, i.e. we are still in the range of $\delta$ for which the square-root term $\sqrt{2vt}$ in the tail bound of Lemma~\ref{lemma:gammatailbound} is at least the linear term $ct$. So Lemma~\ref{lemma:gammatailbound} combined with the Chernoff bound gives that with probability $1-2\delta/3$ for some sufficiently large constant $c'$:

\begin{align}
&-\left(1 - \frac{2p}{\sqrt{k}}\right)\sum_{i: x_i < 0} |x_i|^{p-1} + \left(1 + \frac{2p}{\sqrt{k}}\right)\sum_{i: x_i \geq 0} |x_i|^{p-1} \nonumber\\
\leq& -\left(1 - \frac{2p}{\sqrt{k}}\right)\left( (k-k')\mu - \frac{c'}{12}\sqrt{k\mu \log(1/\delta)}\right) + \left(1 + \frac{2p}{\sqrt{k}}\right) \left(k'\mu + \frac{c'}{12}\sqrt{k \mu \log(1/\delta)}\right)\nonumber\\ 
\leq& (2k' - k)\mu  + (2\sqrt{k}p)\mu + \frac{c'}{4} \sqrt{k \mu \log(1/\delta)}\nonumber\\ 
\leq& 6\mu \sqrt{k \log (1/\delta)}  + (2\sqrt{k}p)\mu + \frac{c'}{3} \mu \sqrt{k p \log(1/\delta)}\nonumber\\ 
\leq& k\mu \cdot \frac{c'}{3}\left(\frac{\sqrt{\log (1/\delta)}}{\sqrt{k}} + \frac{p}{\sqrt{k}} + \frac{\sqrt{p \log(1/\delta)}}{\sqrt{k}}\right) \leq k \mu \cdot c' \frac{\sqrt{p \log(1/\delta)}}{\sqrt{k}}.\label{eq:differencebound}
\end{align}

In the last step, we use that $p \leq \log k \leq \log(1/\delta)$ for the range of $p, \delta$ we consider. On the other hand, by Fact~\ref{fact:subgammasum} $\sum_{i \in [k]} |x_i|^{p-1} = \norm{x}_{p-1}^{p-1}$ is sampled from a random variable in $\Gamma^-(k \mu, 3/2)$ and thus by Lemma~\ref{lemma:ggconcentration} and Lemma~\ref{lemma:gammatailbound} is at least $k \mu$/2 with probability at least $1 - \delta/3$, i.e. $k\mu \leq 2 \norm{x}_{p-1}^{p-1}$ with probability at least $1-\delta/3$. Combined with \eqref{eq:differencebound} by a union bound we get with probability $1 - \delta$:

$$-\left(1 - \frac{2p}{\sqrt{k}}\right)\sum_{i: x_i < 0} |x_i|^{p-1} + \left(1 + \frac{2p}{\sqrt{k}}\right)\sum_{i: x_i \geq 0} |x_i|^{p-1} \leq 2c' \frac{\sqrt{p \log(1/\delta)}}{\sqrt{k}} \cdot \norm{x}_{p-1}^{p-1}$$

Finally, by the Cauchy-Schwarz inequality for any $a \leq b$ and $k$-dimensional $x$ we have $\norm{x}_a \leq k^{1/a - 1/b}\norm{x}_b$. So, $\norm{x}_{p-1}^{p-1} \leq k^{1/p} \norm{x}_p^{p-1}$, giving \eqref{eq:sufficientcondition} with probability $1-\delta$ as desired. 
\end{proof}

Given Lemma~\ref{lemma:lppdifference}, determining the value of $\sigma$ that makes $\mpq_\sigma$ private is fairly straightforward:

\begin{lemma}\label{lemma:sigmachoice}
Let $\mpq_\sigma$ be the mechanism such that $\mpq_\sigma(d)$ samples $x \in \mathbb{R}^k$ from $x \sim GGauss(p, \sigma)$ and outputs $\tilde{d} = d+x$. For $4 \leq p \leq \log k$ that is an even integer, $\epsilon \leq O(1)$, $\delta \in [2^{-O(k / p)}, 1/k]$, and

$$\sigma  = \Theta\left( \frac{\sqrt{kp \log (1/\delta)}}{\epsilon} \right),$$

$\mpq_\sigma$ is $(\epsilon, \delta)$-differentially private.
\end{lemma}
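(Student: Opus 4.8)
The plan is to view $\mpq_\sigma$ as a continuous exponential mechanism with utility $u(d,\tilde d) = -\norm{\tilde d - d}_p^p$ and to establish privacy through a tail bound on the privacy loss. First I will invoke the standard characterization of $(\epsilon,\delta)$-differential privacy: it suffices to show that for every pair $d,d'$ with $\norm{d-d'}_\infty \leq 1$, the privacy loss $\mathcal{L}(\tilde d) := \ln\bigl(f_d(\tilde d)/f_{d'}(\tilde d)\bigr)$ of $\mpq_\sigma$ satisfies $\Pr_{\tilde d \sim \mpq_\sigma(d)}[\mathcal{L}(\tilde d) > \epsilon] \leq \delta$, where $f_d$ is the density of $\mpq_\sigma(d)$; this follows by partitioning any output set $S$ into its high- and low-privacy-loss parts. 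Because the normalizing constant of $GGauss(p,\sigma)$ is translation-invariant, $f_d(\tilde d) = \frac{1}{Z}\exp(-\norm{\tilde d - d}_p^p/\sigma^p)$ with $Z$ independent of $d$, so writing $x = \tilde d - d \sim GGauss(p,\sigma)$ and $\Delta = d'-d \in [-1,1]^k$ we get the exact identity $\mathcal{L}(\tilde d) = \bigl(\norm{x-\Delta}_p^p - \norm{x}_p^p\bigr)/\sigma^p$. Since Lemma~\ref{lemma:lppdifference} applies to every $\Delta \in [-1,1]^k$, the same argument will cover both orderings of $d,d'$.

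Next I will bound the numerator $\norm{x-\Delta}_p^p - \norm{x}_p^p$ with high probability. Applying Lemma~\ref{lemma:lppdifference} with failure probability $\delta/2$ (its hypotheses on $p$ and $\delta$ are exactly those assumed here) bounds it by $O\bigl(p\bigl[k^{1/p-1/2}\sqrt{p\log(1/\delta)}\,\norm{x}_p^{p-1} + p\,k^{p/2}\bigr]\bigr)$ except with probability $\delta/2$. The one remaining random quantity is $\norm{x}_p^{p-1}$, which I control by a separate concentration bound: since the coordinates of $x$ are independent with density $\propto \exp(-|x_i|^p/\sigma^p)$, the change-of-variables fact makes each $|x_i|^p/\sigma^p$ a $Gamma(1/p)$ variable, so $\norm{x}_p^p/\sigma^p$ is a sum of $k$ independent $Gamma(1/p)$ variables and hence lies in $\Gamma^+(k/p,1)$; Lemma~\ref{lemma:gammatailbound} with $t = \log(2/\delta)$, together with the hypothesis $\delta \geq 2^{-O(k/p)}$ (which makes $\log(1/\delta) = O(k/p)$, so the square-root term dominates the linear term), gives $\norm{x}_p^p \leq \sigma^p\cdot O(k/p)$, and therefore $\norm{x}_p^{p-1} \leq \sigma^{p-1}\cdot O(k/p)^{(p-1)/p} \leq \sigma^{p-1}\cdot O(k/p)$, except with probability $\delta/2$.

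On the intersection of these two events (probability at least $1-\delta$), substituting the bound on $\norm{x}_p^{p-1}$ and using $k^{1/p-1/2}\cdot(k/p)^{(p-1)/p} = k^{1/2} p^{1/p}/p$ with $p^{1/p} = O(1)$, then dividing by $\sigma^p$, yields
$$\mathcal{L}(\tilde d) \leq O\!\left(\frac{\sqrt{kp\log(1/\delta)}}{\sigma}\right) + O\!\left(\frac{p^2 k^{p/2}}{\sigma^p}\right).$$
Choosing $\sigma = C\sqrt{kp\log(1/\delta)}/\epsilon$ for a sufficiently large constant $C$ makes the first term at most $\epsilon/2$. For the second term I will use $\sigma^p = \Theta\bigl(k^{p/2}(p\log(1/\delta)/\epsilon^2)^{p/2}\bigr)$, so it equals $\Theta\bigl(p^2\epsilon^p/(p\log(1/\delta))^{p/2}\bigr)$; since $\epsilon = O(1)$ and $\log(1/\delta) \geq \log k \geq p \geq 4$, we have $(p\log(1/\delta))^{p/2} \geq p^p \gg p^2$, so this term is also at most $\epsilon/2$ once $C$ is taken large enough to absorb the constant factor $\epsilon^p$. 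Hence $\mathcal{L}(\tilde d) \leq \epsilon$ with probability at least $1-\delta$, and the characterization from the first step yields $(\epsilon,\delta)$-differential privacy.

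I expect essentially all the difficulty to be the bookkeeping in the last paragraph: tracking the powers of $k$ and $p$ emerging from Lemma~\ref{lemma:lppdifference} and from the $Gamma(k/p)$ tail bound, and in particular checking that the additive ``$p\,k^{p/2}$'' term of Lemma~\ref{lemma:lppdifference}, once divided by $\sigma^p$, is genuinely negligible --- this is precisely where the assumptions $p \geq 4$, $\delta \leq 1/k$, and $\epsilon = O(1)$ get used. Everything else is an assembly of the privacy-loss characterization of approximate differential privacy with results already proved in this section.
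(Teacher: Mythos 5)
Your proposal is correct and follows essentially the same route as the paper's proof: reduce to a tail bound on the privacy loss $(\norm{x-\Delta}_p^p - \norm{x}_p^p)/\sigma^p$, invoke Lemma~\ref{lemma:lppdifference}, concentrate $\norm{x}_p^p/\sigma^p$ around $\Theta(k/p)$, and check that both resulting terms are at most $\epsilon/2$ for $\sigma = \Theta(\sqrt{kp\log(1/\delta)}/\epsilon)$. The only (immaterial) difference is that you obtain the concentration of $\norm{x}_p^p/\sigma^p$ via coordinate-wise $Gamma(1/p)$ variables and sub-gamma additivity, while the paper derives the same $Gamma(k/p)$ law from the radial density $r^{k-1}e^{-r^p}$.
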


\begin{proof}

It suffices to show that for any vector $\Delta$ in $[-1, 1]^k$:

$$\Pr_{\tilde{d} \sim \mpq_\sigma(d)} \left [\log\left(\frac{ \Pr[\mpq_\sigma(d) = \tilde{d}]}{\Pr[\mpq_\sigma(d+\Delta) = \tilde{d}]}\right) \leq \epsilon \right] = \Pr_{\tilde{d} \sim \mpq_\sigma(d)}\left[\frac{\norm{x-\Delta}_p^p - \norm{x}_p^p}{\sigma^p} \leq \epsilon \right] \geq 1-\delta.$$

Here, we abuse notation by letting $\Pr$ also denote a likelihood function. By Lemma~\ref{lemma:lppdifference} we now have with probability $1 - \delta/2$ for a sufficiently large constant $c$:

$$\norm{x - \Delta}_p^p - \norm{x}_p^p \leq cp k^{1/p - 1/2}\sqrt{p \log (1/\delta)}\norm{x}_p^{p-1} + c p^2 k^{\frac{p}{2}}.$$

The pdf of the rescaled norm $r = \norm{x}_p/\sigma$ is proportional to $r^{k-1} \exp(-r^p)$ over $(0, \infty)$ (the $r^{k-1}$ appears because the $(k-1)$-dimensional surface area of the $\ell_p$-sphere of radius $r$ is proportional to $r^{k-1}$). Letting $R$ denote $r^p$, the pdf of $R$ is proportional to $R^{\frac{k}{p} - 1} \exp(-R)$ by change of variables, i.e. $R$ is the random variable $Gamma(\frac{k}{p})$. Then by the Gamma tail bound, with probability at least $1 - e^{-.001k/p} > 1 - \delta/2$, $R$ is contained in $[\frac{k}{2p}, \frac{2k}{p}]$, so $\norm{x}_p$ is contained in $[\sigma \left(\frac{k}{2p}\right)^{1/p}, \sigma \left(\frac{2k}{p}\right)^{1/p}]$. Then by a union bound, with probability $1 - \delta$:

$$\frac{\norm{x - \Delta}_p^p - \norm{x}_p^p}{\sigma^p} \leq \frac{2cp^{1/p} \sqrt{kp \log (1/\delta)}}{\sigma} + \frac{c p^2 k^{\frac{p}{2}}}{\sigma^p}.$$

Noting that $n^{1/n}$ is contained within $[1, e^{1/e}]$ for all $n \geq 1$, letting 

$$\sigma = \Theta\left( \max\left\{\frac{\sqrt{kp \log (1/\delta)}}{\epsilon}, \frac{\sqrt{k} }{\epsilon^{1/p}}\right\} \right) = \Theta\left( \frac{\sqrt{kp \log (1/\delta)}}{\epsilon} \right),$$

we get that $\frac{\norm{x - \Delta}_p^p - \norm{x}_p^p }{\sigma^p} \leq \epsilon$ with probability $1-\delta$ as desired.
\end{proof}

\subsection{Error Guarantees}

In this section, we analyze the $\ell_\infty$ error of $\mpq_\sigma$, for a given choice of $\delta$ in the range specified in Lemma~\ref{lemma:sigmachoice}. We give an expected error bound, and also a tail bound on the error. The error analysis follows almost immediately from the following lemma, which bounds the fraction of a sphere cap's volume with a large first coordinate:

\begin{lemma}\label{lemma:spherecap}
Let $x$ be chosen uniformly at random from a $k$-dimensional $\ell_p$-sphere with arbitrary radius, i.e. the set of points with $\norm{x}_p = R$ for some $R$, for $p \geq 1$. Then we have:
$$\Pr[|x_1| \geq r \norm{x}_p] \leq \left(1 - r^p\right)^{(k-1)/p} \leq \exp\left(-\frac{(k-1)r^p}{p}\right)$$
\end{lemma}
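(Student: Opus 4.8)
The plan is to compute the probability directly as a ratio of $(k-1)$-dimensional surface areas (equivalently, measures) of pieces of the $\ell_p$-sphere. By scale invariance I may assume $R = 1$. The event $|x_1| \geq r$ on the unit $\ell_p$-sphere $\{\norm{x}_p = 1\}$ is, for each fixed value $x_1 = s$ with $|s| \geq r$, parametrized by the remaining coordinates $(x_2,\dots,x_k)$ lying on an $\ell_p$-sphere of radius $(1 - |s|^p)^{1/p}$ in $\mathbb{R}^{k-1}$. So I would write the surface measure of the cap $\{|x_1| \geq r\}$ and of the whole sphere as integrals over $x_1 = s$ of (the $(k-2)$-dimensional surface area of an $\ell_p$-sphere in $\mathbb{R}^{k-1}$ of radius $(1-|s|^p)^{1/p}$) times a common Jacobian factor coming from solving $\norm{x}_p = 1$ for the implicit dependence. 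The key point is that the $(k-2)$-dimensional surface area of an $\ell_p$-sphere of radius $\rho$ in $\mathbb{R}^{k-1}$ scales as $\rho^{k-2}$, and the Jacobian factor is the same function of $s$ in numerator and denominator, so it largely cancels.

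Concretely, I expect to reach the bound
$$\Pr[|x_1| \geq r] = \frac{\int_{|s| \geq r} (1 - |s|^p)^{(k-2)/p} \cdot J(s)\, \D s}{\int_{|s| \leq 1} (1-|s|^p)^{(k-2)/p} \cdot J(s)\, \D s}$$
for an appropriate positive weight $J(s)$ depending only on $|s|$. Rather than identify $J$ exactly, I would bound the cap probability by substituting $u = |s|^p$: on the cap, $(1-u)^{(k-2)/p} \leq (1-r^p)^{(k-1)/p - \text{something}}$ after accounting for one more factor of $(1-u)^{1/p}$ that appears from the change of variables $s \mapsto u$ (namely $\D s \propto u^{1/p - 1}\D u$, and combining exponents gives the clean $(k-1)/p$ in the statement). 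The cleanest route is: show the density of $x_1$ on the unit $\ell_p$-sphere is proportional to $(1 - |x_1|^p)^{(k-1)/p - 1}$ on $[-1,1]$ — this is a standard computation, e.g. via the fact that if $x$ is a Generalized Gaussian then $x/\norm{x}_p$ is uniform on the sphere and $\norm{x}_p$ is independent of the direction, so one can read off the marginal from the Generalized Gamma / beta-type structure; or directly via the surface-area scaling argument above. Then
$$\Pr[|x_1| \geq r] = \frac{\int_r^1 (1-t^p)^{(k-1)/p - 1}\,\D t}{\int_0^1 (1-t^p)^{(k-1)/p - 1}\,\D t} \cdot \frac{2}{2} ,$$
and I bound the numerator by pulling out the factor $(1-t^p)$ is decreasing: $(1-t^p)^{(k-1)/p-1} \leq (1-r^p)^{(k-1)/p-1}$ is too lossy, so instead I keep one power and note $\int_r^1 (1-t^p)^{(k-1)/p-1}\D t \leq \int_r^1 (1-t^p)^{(k-1)/p-1} \cdot \frac{p t^{p-1}}{p r^{p-1}}\D t \leq \frac{1}{p r^{p-1}}(1-r^p)^{(k-1)/p} \cdot \frac{p}{k-1}$, while the denominator is bounded below by a constant times $\frac{1}{k-1}$; matching these gives the stated $(1-r^p)^{(k-1)/p}$ up to the bookkeeping. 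Finally $1 - r^p \leq e^{-r^p}$ gives the second inequality.

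The main obstacle I anticipate is getting the surface-measure / marginal-density computation rigorous with the correct exponent $(k-1)/p - 1$ (as opposed to off-by-one errors in the dimension count), and then handling the ratio of the two incomplete-beta-type integrals cleanly enough to land exactly on $(1-r^p)^{(k-1)/p}$ with no stray constant; the substitution $u = t^p$ turns both integrals into Beta functions $B(\cdot,\cdot)$ and the ratio becomes a ratio of incomplete to complete Beta functions, for which the desired bound is the standard tail estimate $I_{1-r^p}(\tfrac{k-1}{p}, \text{const}) \leq (1-r^p)^{(k-1)/p}$-type inequality. A slicker alternative that sidesteps the constant-chasing entirely: use the Generalized Gaussian representation — let $x = g/\norm{g}_p$ where $g \sim GGauss(p,1)$; then $|x_1| \geq r\norm{x}_p$ iff $|g_1|^p \geq r^p \norm{g}_p^p = r^p\sum_j |g_j|^p$, and since the $|g_j|^p$ are i.i.d. up to the change of variables used earlier in the paper, I can bound $\Pr[|g_1|^p(1-r^p) \geq r^p \sum_{j\geq 2}|g_j|^p]$ by conditioning on $|g_1|$ and using a Chernoff/Markov bound on $\sum_{j \geq 2} |g_j|^p$ (a sum of $k-1$ i.i.d. exponential-type variables) — but this seems likely to lose the clean closed form, so I would present the direct sphere-cap integral as the main proof and fall back on Generalized Gaussians only if the constants become unwieldy.
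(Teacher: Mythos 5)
Your route is genuinely different from the paper's: the paper never computes the marginal of $x_1$, but instead argues geometrically, showing the cap $\{x_1 \geq r, \norm{x}_p \leq 1\}$ is contained in the shifted ball $\{x_1\geq r, \norm{x - v}_p \leq (1-r^p)^{1/p}\}$ and invoking monotonicity of surface area under inclusion of convex bodies (Lemma~\ref{lemma:convexbodies}, via Cauchy's projection formula), so the cap's surface area is at most that of a copy of the hemisphere rescaled by $(1-r^p)^{1/p}$, giving the exponent $(k-1)/p$ with no integral estimates at all. Your marginal-density idea is viable, but be aware of a measure subtlety: your ``cleanest route'' (normalize a Generalized Gaussian) produces the \emph{cone} measure on the $\ell_p$-sphere, which for $p \neq 2$ is not the same as the surface ($(k-1)$-dimensional Hausdorff) measure that the paper's proof works with; correspondingly, your first slicing argument does not have a Jacobian depending only on $s$ under the surface measure, so the claimed cancellation is only clean in the cone-measure formulation. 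Under the cone measure your density claim is correct: $|x_1|^p \sim \mathrm{Beta}(1/p,(k-1)/p)$ (since $|g_1|^p\sim Gamma(1/p)$ and $\sum_{j\geq 2}|g_j|^p \sim Gamma((k-1)/p)$ are independent), hence $x_1$ has density proportional to $(1-|t|^p)^{(k-1)/p-1}$ on $[-1,1]$; this is in fact the measure relevant to the paper's application of the lemma to $x/\norm{x}_p$ for $x \sim GGauss(p,\sigma)$.

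The genuine gap is in the final bookkeeping, and as sketched it does not close. Your lower bound on the denominator is wrong: $\int_0^1 (1-t^p)^{(k-1)/p-1}\,\D{t} = \tfrac{1}{p}B\bigl(\tfrac1p,\tfrac{k-1}{p}\bigr)$, which is of order $(p/(k-1))^{1/p}$ up to $p$-dependent constants, not $\Theta(1/(k-1))$. With the correct denominator, your numerator bound $\int_r^1(1-t^p)^{(k-1)/p-1}\D{t} \leq \tfrac{1}{r^{p-1}(k-1)}(1-r^p)^{(k-1)/p}$ yields the claimed inequality only when $r \gtrsim (c_p/(k-1))^{1/p}$; for smaller $r$ the prefactor exceeds $1$, and you cannot fall back on the trivial bound $\Pr \leq 1$ there because $(1-r^p)^{(k-1)/p}$ is already a constant strictly below $1$ in that regime. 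The statement you need is exactly the Beta-tail inequality $\Pr[B \geq u] \leq (1-u)^{b}$ for $B\sim \mathrm{Beta}(a,b)$ with $a = 1/p \leq 1$, $b=(k-1)/p$, $u = r^p$, and this must be proved, not cited as ``standard.'' One clean fix: let $h(u) = (1-u)^{b} - \Pr[B\geq u]$; then $h(0)=h(1)=0$ and $h'(u) = (1-u)^{b-1}\bigl(u^{a-1}/B(a,b) - b\bigr)$, which (since $a\leq 1$ makes $u^{a-1}$ nonincreasing) changes sign at most once, from positive to negative; together with the boundary values this forces $h \geq 0$ on $[0,1]$, which is the lemma. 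With that step supplied, your argument is a correct alternative proof (for the cone measure), trading the paper's convex-geometry lemma for an explicit one-dimensional distributional identity.
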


This lemma or one providing a similar bound likely already exists in the literature, but we are unaware of a reference for it. So, for completeness we give the full proof in Section~\ref{section:deferred-proofs}.

\begin{corollary}\label{corollary:spherecap2}
Let $x$ be chosen uniformly at random from a $k$-dimensional $\ell_p$-sphere with arbitrary radius for $p \geq 1$. Then we have:
$$\Pr[\norm{x}_\infty \geq r \norm{x}_p] \leq k \cdot \exp\left(-\frac{(k-1)r^p}{p}\right)$$
\end{corollary}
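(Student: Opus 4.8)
The plan is to derive Corollary~\ref{corollary:spherecap2} from Lemma~\ref{lemma:spherecap} by a straightforward union bound over the $k$ coordinates. First I would observe that by symmetry of the uniform distribution on the $\ell_p$-sphere under permutations of coordinates, the bound $\Pr[|x_j| \geq r\norm{x}_p] \leq \exp(-(k-1)r^p/p)$ from Lemma~\ref{lemma:spherecap} holds for every coordinate $j \in [k]$, not just $j = 1$. Then the event $\norm{x}_\infty \geq r\norm{x}_p$ is exactly the event that $|x_j| \geq r\norm{x}_p$ for \emph{some} $j \in [k]$, so:
\[
\Pr[\norm{x}_\infty \geq r\norm{x}_p] = \Pr\left[\bigcup_{j=1}^k \{|x_j| \geq r\norm{x}_p\}\right] \leq \sum_{j=1}^k \Pr[|x_j| \geq r\norm{x}_p] \leq k \cdot \exp\left(-\frac{(k-1)r^p}{p}\right).
\]

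That is really the entire argument; there is no substantive obstacle here. The only minor care needed is the boundary behavior of the inequality: when $r \geq 1$ the event $\{|x_j| \geq r\norm{x}_p\}$ is vacuous (since each coordinate is at most the $\ell_p$-norm), so the statement holds trivially, and Lemma~\ref{lemma:spherecap}'s first bound $(1 - r^p)^{(k-1)/p}$ is not even real-valued in that regime — but the final exponential bound $k\exp(-(k-1)r^p/p)$ is still a valid (if weak, being larger than $1$ unless $r$ is moderately large) upper bound, so nothing breaks. One should also note the radius $R$ is arbitrary and the bound is scale-invariant, matching the hypothesis of Lemma~\ref{lemma:spherecap}.

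If anything counts as ``the hard part,'' it is merely making explicit the permutation-symmetry step, i.e.\ that Lemma~\ref{lemma:spherecap}'s bound for $x_1$ transfers to each $x_j$ — but this is immediate because the uniform measure on $\{x : \norm{x}_p = R\}$ is invariant under coordinate permutations (the $\ell_p$-norm itself is symmetric in the coordinates). With that in hand, the union bound closes the proof in one line.
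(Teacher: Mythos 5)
Your proof is correct and is essentially identical to the paper's: both derive the corollary from Lemma~\ref{lemma:spherecap} via a union bound over the $k$ coordinates, using the fact that the coordinates have identical marginal distributions (permutation symmetry of the uniform measure on the $\ell_p$-sphere). The extra remarks about the $r \geq 1$ regime and scale invariance are fine but not needed.
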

\begin{proof}
This follows from Lemma~\ref{lemma:spherecap} and a union bound over all $k$ coordinates (which have identical marginal distributions).
\end{proof}

Combining this corollary with Lemma~\ref{lemma:sigmachoice}, it is fairly straightforward to prove our first main result:

\begin{theorem}\label{thm:main-1}
Let $\mpq_\sigma$ be the mechanism such that $\mpq_\sigma(d)$ samples $x \in \mathbb{R}^k$ from $GGauss(p, \sigma)$ and outputs $\tilde{d} = d+x$. For $4 \leq p \leq \log k$ that is an even integer, For $\epsilon \leq O(1)$, $\delta \in [2^{-O(k / p)}, 1/k]$, and

$$\sigma  = \Theta\left( \frac{\sqrt{kp \log (1/\delta)}}{\epsilon} \right),$$

$\mpq_\sigma$ is $(\epsilon, \delta)$-differentially private and for some sufficiently large constant $c$, and all $t \geq 0$:
$$\Pr_{\tilde{d} \sim \mpq_\sigma(d)}\left[\norm{\tilde{d} - d}_\infty \geq \frac{c t \sqrt{kp} \log^{1/p} k \sqrt{\log (1/\delta)}}{\epsilon}\right] \leq e^{-t^p \log k} + e^{-.001 k / p} $$
\end{theorem}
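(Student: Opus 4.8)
The plan is to combine the privacy bound of Lemma~\ref{lemma:sigmachoice} with the sphere-cap concentration of Corollary~\ref{corollary:spherecap2}, decomposing the Generalized Gaussian sample $x \sim GGauss(p,\sigma)$ into its "radial" part $\norm{x}_p$ and its "directional" part $x / \norm{x}_p$, which is uniform on the unit $\ell_p$-sphere. The privacy claim is immediate: it is exactly the conclusion of Lemma~\ref{lemma:sigmachoice} for the stated $\sigma = \Theta(\sqrt{kp\log(1/\delta)}/\epsilon)$, so nothing new is needed there.

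For the error bound, first I would control the radius. As computed in the proof of Lemma~\ref{lemma:sigmachoice}, $R := (\norm{x}_p/\sigma)^p$ is distributed as $Gamma(k/p)$, so by the Gamma tail bound $R \le 2k/p$ — equivalently $\norm{x}_p \le \sigma (2k/p)^{1/p}$ — fails with probability at most $e^{-\Omega(k/p)}$; this is the source of the additive $e^{-.001k/p}$ term in the statement. Condition on this good radius event. Next, conditioned on $\norm{x}_p = R$, the direction $x/R$ is uniform on the unit $\ell_p$-sphere, so Corollary~\ref{corollary:spherecap2} gives
$$\Pr\left[\norm{x}_\infty \geq r \norm{x}_p\right] \leq k \exp\left(-\frac{(k-1)r^p}{p}\right).$$
I would now choose $r$ so that the bound on $\norm{x}_\infty = r\norm{x}_p$ matches the target threshold. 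Using $\norm{x}_p \le \sigma(2k/p)^{1/p}$, setting $r = \Theta\!\left(t (p \log k / k)^{1/p}\right)$ makes $r\norm{x}_p = \Theta\!\left(t \sigma \log^{1/p} k \cdot (p/k)^{1/p} \cdot (k/p)^{1/p}\right) = \Theta(t\sigma \log^{1/p} k)$, wait — more carefully, $r \norm{x}_p \le r \sigma (2k/p)^{1/p}$, and with $r^p = \Theta(t^p p \log k / k)$ we get $(k-1)r^p/p = \Theta(t^p \log k)$, so the sphere-cap probability is at most $k \cdot e^{-\Theta(t^p \log k)} \le e^{-t^p \log k + \log k}$, which is $\le e^{-\Omega(t^p \log k)}$ and can be absorbed into the $e^{-t^p\log k}$ term (adjusting the constant $c$); meanwhile $r\norm{x}_p \le \sigma \cdot \Theta\!\left(t (p\log k/k)^{1/p}\right) \cdot (2k/p)^{1/p} = \Theta(t \sigma \log^{1/p} k)$, and plugging in $\sigma = \Theta(\sqrt{kp\log(1/\delta)}/\epsilon)$ yields the claimed threshold $c t \sqrt{kp}\log^{1/p} k \sqrt{\log(1/\delta)}/\epsilon$. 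A union bound over the radius-failure event and the sphere-cap event gives the stated $e^{-t^p\log k} + e^{-.001k/p}$.

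The only mild subtlety — and the step I would be most careful with — is bookkeeping the $k^{1/p}$-type factors: one needs the factor $k$ from the union bound in Corollary~\ref{corollary:spherecap2} to be swallowed by the exponent $e^{-(k-1)r^p/p}$, which works precisely because $r^p = \Omega(p\log k/k)$ forces $(k-1)r^p/p = \Omega(\log k) = \omega(1)$ even at $t = \Theta(1)$, and because $n^{1/n} \in [1, e^{1/e}]$ absorbs stray $(2)^{1/p}$ and $p^{1/p}$ constants (as already used in Lemma~\ref{lemma:sigmachoice}). The case $t$ very small is harmless: the bound is vacuous once the right-hand probability exceeds $1$. The expected-value corollary then follows by integrating the tail bound over $t$, noting $\int_0^\infty e^{-t^p\log k}\,dt = O(1/\log^{1/p}k) \cdot \Gamma(1+1/p) = O(1)$ since $\log^{1/p} k \ge 1$, so the expectation is $O(\sqrt{kp}\log^{1/p}k\sqrt{\log(1/\delta)}/\epsilon)$ plus the negligible contribution $e^{-.001k/p} \cdot (\text{max error})$, which is lower order.
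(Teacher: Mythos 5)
Your proposal is correct and follows essentially the same route as the paper: privacy is quoted from Lemma~\ref{lemma:sigmachoice}, the radius $(\norm{x}_p/\sigma)^p \sim Gamma(k/p)$ is controlled by a Gamma tail bound (giving the $e^{-.001k/p}$ term), and the direction is handled by Corollary~\ref{corollary:spherecap2} with $r = \Theta\bigl(t(p\log k/k)^{1/p}\bigr)$, which matches the paper's choice $r = 4t\log^{1/p}k/k^{1/p}$ up to the immaterial $p^{1/p}$ factor. Your explicit bookkeeping of the union-bound factor $k$ being absorbed into the exponent is, if anything, slightly more careful than the paper's own display, so no gaps beyond those already present in the paper's argument.
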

\begin{proof}
The privacy guarantee follows from Lemma~\ref{lemma:sigmachoice}. 

For the tail bound, if $\norm{\tilde{d}-d}_\infty > \frac{c t \sqrt{k} \log^{1/p} k \sqrt{p \log (1/\delta)}}{\epsilon}$ we have either $\norm{x}_p \geq \Omega\left(\frac{k^{1/2 + 1/p} \sqrt{p \log (1/\delta)}}{\epsilon}\right)$ or $\norm{x}_\infty > \frac{4t \log^{1/p} k}{k^{1/p}} \norm{x}_p$. Recall that $(\norm{x}_p/\sigma)^p$ is distributed according to a $Gamma(\frac{k}{p})$ random variable, and thus by a Gamma tail bound exceeds $2k/p$ with probability at most $e^{-.001 k / p}$. In turn, $\norm{x}_p \geq \left(\frac{2k}{p}\right)^{1/p}\sigma = \Omega\left(\frac{k^{1/2 + 1/p} \sqrt{p \log (1/\delta)}}{\epsilon}\right)$ with at most this probability. Then it follows by setting $r = \frac{4t \log^{1/p} k}{k^{1/p}}$ in Corollary~\ref{corollary:spherecap2} and a union bound that:

$$\Pr\left[\norm{\tilde{d} - d}_\infty \geq \frac{c t \sqrt{k} \log^{1/p} k \sqrt{p \log (1/\delta)}}{\epsilon}\right] \leq \Pr\left[\norm{x}_\infty \geq \frac{4t \log^{1/p} k}{k^{1/p}} \norm{x}_p\right] + e^{-.001 k / p} \leq$$
$$\exp\left( - \frac{(k-1)4^p t^p \log k}{kp}\right) + e^{-.001 k / p} \leq e^{-t^p \log k} + e^{-.001 k / p}.$$
\end{proof}

This proves Theorem~\ref{thm:main-1-simple}, up to some details which we defer to Section~\ref{section:deferred-proofs}.
\section{Composition with Sparse Vector}\label{section:sv}

In this section, we generalize the mechanism of \cite{SteinkeU17}, which is a composition of the Gaussian mechanism and sparse vector mechanism of \cite{DworkNRRV09}, by analyzing a composition of $\mpq_\sigma$ and the sparse vector mechanism instead. The guarantees given by sparse vector can be given in the following form that we will use:

\begin{theorem}[Sparse Vector]\label{thm:sv}
For every $k \geq 1, c_{SV} \leq k, \epsilon_{SV}, \delta_{SV}, \beta_{SV} > 0$, and 

$$\alpha_{SV} \geq O\left(\frac{\sqrt{c_{SV} \log (1/\delta_{SV})} \log (k / \beta_{SV})}{\epsilon_{SV}}\right),$$
there exists a mechanism $SV$ that takes as input $d \in \mathbb{R}^k$ and outputs $\tilde{d} \in \mathbb{R}^k$ such that:
\begin{itemize}
    \item $SV$ is $(\epsilon_{SV}, \delta_{SV})$-differentially private.
    \item If at most $c_{SV}$ entries of $d$ have absolute value strictly greater than $\alpha_{SV} / 2$, then:
    $$\Pr_{\tilde{d} \sim SV(d)}\left[\norm{\tilde{d} - d}_\infty \geq \alpha_{SV}\right] \leq \beta_{SV}.$$
    \item Regardless of the value of $d$ we have for all $t \geq 0$:
    $$\Pr_{\tilde{d} \sim SV(d)}[\norm{\tilde{d} - d} \geq \max\{\norm{d}_\infty, t \sqrt{k \log(1/\delta_{SV})}/\epsilon_{SV})] \leq ke^{- \Omega(t)}.$$
\end{itemize}
\end{theorem}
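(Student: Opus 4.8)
The plan is to realize $SV$ as the classical sparse-vector (\texttt{AboveThreshold}) mechanism of \cite{DworkNRRV09} run on the coordinate-magnitude queries $q_i(d) = |d_i|$, and to extract the stated parameters by organizing it into $c_{SV}$ identical ``phases'' composed via advanced composition. Concretely, I would fix a Laplace scale $b = \Theta(\sqrt{c_{SV}\log(1/\delta_{SV})}/\epsilon_{SV})$ and a threshold $T = \alpha_{SV}/2 + 2\gamma$, where $\gamma = \Theta(b\log(k/\beta_{SV}))$ is chosen small enough (using the hypothesis $\alpha_{SV} = \Omega(b\log(k/\beta_{SV}))$) that $\gamma \le \alpha_{SV}/8$. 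The mechanism maintains a counter initialized to $c_{SV}$ and sweeps over the coordinates; in each phase it draws a fresh noisy threshold $\hat T = T + Lap(b)$, tests successive coordinates via ``$|d_i| + Lap(b) \ge \hat T$?'' with fresh noise each time, releases the first passing coordinate as $\tilde{d}_i = d_i + Lap(b)$, decrements the counter, and moves to the next phase; every coordinate that is not released (including everything after the counter hits zero) is output as $0$.

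For privacy, I would observe that one phase is the composition of a budget-$1$ \texttt{AboveThreshold} instance on $\ell_\infty$-sensitivity-$1$ queries --- which, with Laplace noise of scale $\Theta(1/\epsilon_0)$ on the threshold and on the queries, is $\epsilon_0$-differentially private by the standard argument of \cite{DworkNRRV09} --- together with one $\epsilon_0$-differentially private Laplace release, so each phase is $O(\epsilon_0)$-differentially private. Since the phases are run adaptively, advanced composition over the $c_{SV}$ phases gives $(O(\epsilon_0\sqrt{c_{SV}\log(1/\delta_{SV})}), \delta_{SV})$-differential privacy, and setting $\epsilon_0 = \Theta(\epsilon_{SV}/\sqrt{c_{SV}\log(1/\delta_{SV})})$ yields $(\epsilon_{SV},\delta_{SV})$-differential privacy with $1/\epsilon_0 = \Theta(b)$ as chosen. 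This privacy accounting is the step I expect to be the main obstacle: getting the $\sqrt{c_{SV}\log(1/\delta_{SV})}$ factor in the noise scale (rather than the $c_{SV}$ factor of the vanilla pure-DP sparse-vector analysis) forces the phase decomposition, and one must be careful that \texttt{AboveThreshold}'s implicit leakage of its below-threshold answers through its halting index is already charged inside the $\epsilon_0$-DP guarantee of a budget-$1$ instance, so that adaptive composition of the phases is legitimate; one should also check that the side conditions of advanced composition ($\epsilon_0 = O(1)$ and $c_{SV}\epsilon_0^2 = O(\epsilon_{SV})$) hold in the relevant parameter range.

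For utility I would condition on the event, of probability at least $1 - \beta_{SV}$ by the Laplace tail bound and a union bound over the $O(k)$ noise variables drawn, that every such noise has magnitude at most $\gamma \le \alpha_{SV}/8$. On this event a coordinate passes only if $|d_i| \ge T - 2\gamma = \alpha_{SV}/2$, so at most $c_{SV}$ coordinates ever pass and the counter is never prematurely exhausted; hence a coordinate with $|d_i| \ge T + 2\gamma$ is always reached, passes, and is released with error $|Lap(b)| \le \gamma$; and a coordinate output as $0$ has $|d_i| < T + 2\gamma \le \alpha_{SV}$. So $\norm{\tilde{d} - d}_\infty \le \alpha_{SV}$, giving the second bullet. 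For the third bullet, each $\tilde{d}_i$ is either $0$ (error $|d_i| \le \norm{d}_\infty$) or $d_i + Lap(b)$ with $b = O(\sqrt{k\log(1/\delta_{SV})}/\epsilon_{SV})$ since $c_{SV}\le k$; the Laplace tail bound gives $\Pr[|Lap(b)| \ge t\sqrt{k\log(1/\delta_{SV})}/\epsilon_{SV}] \le e^{-\Omega(t)}$, and a union bound over the $k$ coordinates finishes. The entire utility part is routine Laplace-tail bookkeeping.
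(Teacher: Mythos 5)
Your construction is exactly the mechanism the paper invokes: the paper simply cites the NumericSparse algorithm of Dwork--Roth (modified to output $0$ in place of $\bot$ and for the post-budget queries), whose internal analysis is precisely your phase-by-phase \texttt{AboveThreshold} plus advanced-composition argument with noise scale $\Theta(\sqrt{c_{SV}\log(1/\delta_{SV})}/\epsilon_{SV})$, and your third-bullet Laplace-tail/union-bound argument is the same as the paper's. The only quibble is a constant-level edge case: with $T=\alpha_{SV}/2+2\gamma$ a coordinate with $|d_i|$ exactly equal to $\alpha_{SV}/2$ can pass on your good event even though the hypothesis only bounds the number of entries \emph{strictly} greater than $\alpha_{SV}/2$, which is fixed by taking, say, $T=\alpha_{SV}/2+3\gamma$ so that a pass forces $|d_i|>\alpha_{SV}/2$.
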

\begin{proof}
The mechanism is given by modifying the NumericSparse algorithm given as Algorithm 3 in \cite{DworkR14} by outputting $0$ instead of $\bot$ or $0$ for all remaining queries instead of halting prematurely. The first two properties follow from the associated proofs in that text. 

The third property follows because for all entries of $\tilde{d}$ that $SV$ does not output as $0$ (for which the error, i.e. corresponding entry of $\tilde{d} - d$, is of course bounded by $\norm{d}_\infty$), the error  is drawn from $Lap(b)$ where $b = O(\sqrt{k \log(1/\delta_{SV})}/\epsilon_{SV})$. So the maximum error for these (at most $c_{SV} \leq k$) entries is stochastically dominated by the maximum of the absolute value of $k$ of these Laplace random variables, which is at most $tb$ with probability $ke^{-t}$. 
\end{proof}

We now prove our main result:
\begin{theorem}\label{thm:main-2}
For any $4 \leq p \leq \log k$ that is an even integer, $\epsilon \leq O(1)$, $\delta \in [2^{-O(k / p)}, 1/k]$, and $t \in [0, O(\frac{\log k}{\log \log k})]$, there exists a $(\epsilon, \delta)$-differentially private mechanism $\mathcal{M}$ that takes in a vector $d \in \mathbb{R}^k$ and outputs a random $\tilde{d} \in \mathbb{R}^k$ such that for a sufficiently large constant $c$ :

$$\Pr_{\tilde{d} \sim \mathcal{M}(d)}\left[\norm{\tilde{d} - d}_\infty \geq \frac{c t \sqrt{k p \log (1/\delta)} (\log \log k)^{1/p}}{\epsilon}\right] \leq e^{-\log^t k}.$$
\end{theorem}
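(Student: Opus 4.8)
The plan is to construct $\mathcal{M}$ as an adaptive composition of two $(\epsilon/2,\delta/2)$-private mechanisms, mirroring the Gaussian-plus-sparse-vector construction of \cite{SteinkeU17}. First run the Generalized Gaussian mechanism of Lemma~\ref{lemma:sigmachoice} with budget $(\epsilon/2,\delta/2)$: sample $x \sim GGauss(p,\sigma)$ with $\sigma = \Theta(\sqrt{kp\log(1/\delta)}/\epsilon)$ and let $y = d + x$. Then run the sparse vector mechanism $SV$ of Theorem~\ref{thm:sv} with $(\epsilon_{SV},\delta_{SV}) = (\epsilon/2,\delta/2)$, $\beta_{SV} = \tfrac12 e^{-\log^t k}$, threshold $\alpha_{SV} = \Theta\big((t\log\log k)^{1/p}\sigma\big)$, and $c_{SV} = \Theta\big(kp(\log\log k)^{2/p}/\log^{2t}k\big)$, applied to the vector $x = y - d$, obtaining $\tilde x$; finally output $\tilde d = y - \tilde x = d + x - \tilde x$. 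One first checks that in the stated regime $t = O(\log k/\log\log k)$ this choice satisfies $1 \le c_{SV} \le k$, that the admissibility condition $\alpha_{SV} \ge \Omega(\sqrt{c_{SV}\log(1/\delta)}\log(k/\beta_{SV})/\epsilon)$ of Theorem~\ref{thm:sv} holds (here $\log(k/\beta_{SV}) = \Theta(\log^t k)$ since $t \ge 1$, and $\sqrt{c_{SV}}\log^t k = \Theta(\sqrt{kp}(\log\log k)^{1/p})$), and that $\alpha_{SV}$ is at most the target $ct\sqrt{kp\log(1/\delta)}(\log\log k)^{1/p}/\epsilon$.

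For privacy, I would observe that once $y$ has been released, the input $x = y - d$ to $SV$ is a sensitivity-$1$ function of $d$ (replacing $d$ by any $d'$ with $\norm{d - d'}_\infty \le 1$ changes it by $d - d' \in [-1,1]^k$), so $SV$ pre-composed with this map is $(\epsilon/2,\delta/2)$-differentially private by Theorem~\ref{thm:sv}. Combined with the $(\epsilon/2,\delta/2)$-privacy of $\mpq_\sigma$ from Lemma~\ref{lemma:sigmachoice} (applied with budget halved; the allowed ranges of $\epsilon,\delta$ are preserved) and basic sequential composition, the pair $(y,\tilde x)$ is $(\epsilon,\delta)$-differentially private, hence so is its post-processing $\tilde d$.

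For utility, the key step is the concentration statement that, with probability at least $1 - \tfrac12 e^{-\log^t k}$, at most $c_{SV}$ coordinates of $x$ have absolute value greater than $\alpha_{SV}/2$. Because $GGauss(p,\sigma)$ has density proportional to $\prod_i \exp(-|x_i/\sigma|^p)$, the coordinates $x_i$ are independent with one-dimensional marginal proportional to $\exp(-|x_i/\sigma|^p)$, and a direct integral estimate gives $\Pr[\,|x_i| \ge s\sigma\,] \le e^{-s^p}$ for $s$ above an absolute constant. Taking $s = \alpha_{SV}/(2\sigma) = \Theta((t\log\log k)^{1/p})$ makes this probability at most $(\log k)^{-\Theta(t)}$, which for a suitable constant in $\alpha_{SV}$ is at most $c_{SV}/(2ek)$; then by independence the number $N$ of large coordinates obeys $\Pr[N \ge c_{SV}] \le \binom{k}{c_{SV}}(c_{SV}/(2ek))^{c_{SV}} \le 2^{-c_{SV}}$, and $2^{-c_{SV}} \le \tfrac12 e^{-\log^t k}$ exactly because $c_{SV} = \Omega(\log^t k)$ in the allowed range of $t$. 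Conditioned on this event, the second bullet of Theorem~\ref{thm:sv} gives $\norm{x - \tilde x}_\infty < \alpha_{SV}$ with probability at least $1 - \beta_{SV}$, so $\norm{\tilde d - d}_\infty = \norm{x - \tilde x}_\infty < \alpha_{SV} \le ct\sqrt{kp\log(1/\delta)}(\log\log k)^{1/p}/\epsilon$. A union bound over the two bad events, of total probability $\tfrac12 e^{-\log^t k} + \beta_{SV} = e^{-\log^t k}$, completes the proof. (For $t$ below an absolute constant the claimed bound is either vacuous or handled by enlarging $c$, so the relevant regime is $t \ge 1$.)

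The step I expect to be the real obstacle is the simultaneous calibration of $\alpha_{SV}$ and $c_{SV}$: $\alpha_{SV}$ must be large enough to satisfy Theorem~\ref{thm:sv}'s admissibility bound (which grows with $\sqrt{c_{SV}}$ and with $\log(k/\beta_{SV}) = \Theta(\log^t k)$) and large enough that only $\ll c_{SV}$ coordinates of $x$ exceed $\alpha_{SV}/2$, yet small enough to match the target error; and $c_{SV}$ must be small enough for the admissibility bound to be compatible with the target error, yet large enough that $2^{-c_{SV}} \le e^{-\log^t k}$. The non-obvious point — and the reason the admissible range of $t$ comes out as $O(\log k/\log\log k)$ rather than something much smaller (such as the naive $(\log k/\log\log k)^{1/p}$) — is that one should push $c_{SV}$ all the way to the largest value the admissibility bound permits, $\Theta(kp(\log\log k)^{2/p}/\log^{2t}k)$, which far exceeds the expected number $k(\log k)^{-\Theta(t)}$ of large coordinates, so that the deviation of $N$ above $c_{SV}$ is exponentially small in $c_{SV}$; the remaining bookkeeping (checking the constants in $\sigma$, $\alpha_{SV}$, $c_{SV}$ line up, and that $p \le \log k$ makes the polyloglog factors harmless) is routine.
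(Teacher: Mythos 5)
Your proposal is correct and follows the paper's overall skeleton (Generalized Gaussian noise, then sparse vector applied to the noise vector, privacy by composition, utility by showing few coordinates of $x$ exceed $\alpha_{SV}/2$), but it differs from the paper's proof in two substantive ways. First, the paper includes a truncation branch: if $\norm{x}_p^p > 2k\sigma^p/p$ it outputs $d$ outright, charging this event a third share $\delta/3$ of the privacy budget; it needs this because its ``few large coordinates'' step goes through the sphere-cap bound of Lemma~\ref{lemma:spherecap} applied to $x/\norm{x}_p$, together with the high-probability bound $\norm{x}_p \leq (2k/p)^{1/p}\sigma$, followed by a Chernoff bound (the truncation is also reused later, in the expectation bound of Theorem~\ref{thm:main-2-simple}). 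You instead exploit the product form of the $GGauss(p,\sigma)$ density to get the one-dimensional tail $\Pr[|x_i|\geq s\sigma]\leq e^{-s^p}$ directly and then use a binomial tail bound; this is more elementary, avoids Lemma~\ref{lemma:spherecap} and the truncation entirely, and gives a cleaner two-way $(\epsilon/2,\delta/2)+(\epsilon/2,\delta/2)$ composition. Second, your parameters differ: the paper takes $c_{SV}=4k/\log^{2+2t}k$ and $\alpha_{SV}=12t(\log\log k)^{1/p}\sigma$, while you take $c_{SV}=\Theta(kp(\log\log k)^{2/p}/\log^{2t}k)$ and $\alpha_{SV}=\Theta((t\log\log k)^{1/p}\sigma)$; your admissibility check for Theorem~\ref{thm:sv} and the calibration against the target error both go through, and the $t^{1/p}$ scaling of $\alpha_{SV}$ is in fact slightly sharper than needed since the claimed threshold carries a full factor of $t$. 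The only caveat is the small-$t$ regime you flag: your argument assumes $t$ at least a constant, but the paper's own proof has the same limitation (its requirement $\alpha_{SV}\gtrsim\sqrt{c_{SV}\log(1/\delta)}\log(k/\beta_{SV})/\epsilon$ also degenerates as $t\to 0$), so this is not a gap relative to the paper.
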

\begin{proof}
The mechanism is as follows: We sample $x \sim GGauss(p, \sigma)$ for 

$$\sigma  = \Theta\left( \frac{\sqrt{kp \log (1/\delta)}}{\epsilon} \right),$$

If $\norm{x}_p^p > 2k\sigma^p/p$, we output $d$. Otherwise, we instantiate $SV$ from Theorem~\ref{thm:sv} with parameters:

$$\alpha_{SV} = 12t(\log \log k)^{1/p} \sigma \leq  \frac{ct \sqrt{kp \log(1/\delta)} (\log \log k)^{1/p}}{\epsilon}, \qquad c_{SV} = 4k/\log^{2+2t} k,$$
$$\epsilon_{SV} = \epsilon/2,\qquad \delta_{SV} = \delta/3, \qquad \beta_{SV} = \exp(-\log^t k) / 2.$$
We input $x$ to $SV$ to sample $\hat{x}$, and then output $\tilde{d} = d + x - \hat{x}$.

First, note that:

$$\frac{\sqrt{c_{SV} \log (1/\delta_{SV})} \log (k / \beta_{SV})}{\epsilon_{SV}} \leq \frac{\sqrt{\frac{16k}{\log^{2+2t}k} \log (1/\delta)} (\log k + \log^t k)}{\epsilon} \leq \frac{4 \sqrt{k \log(1/\delta)}}{\epsilon},$$

i.e. $\alpha$ satisfies the requirements of Theorem~\ref{thm:sv} as long as the constant hidden in the $\Theta(\cdot)$ notation in the choice of $\sigma$ is sufficiently large.

To analyze the privacy guarantee, this is the composition of:
\begin{itemize}
    \item The mechanism of Theorem~\ref{thm:main-1}, which if the constant hidden in the $\Theta(\cdot)$ in the expression for $\sigma$ is sufficiently large, is $(\epsilon/2, \delta/3)$-differentially private.
    \item The $SV$ mechanism of Theorem~\ref{thm:sv}, with parameters set so it is $(\epsilon/2, \delta/3)$-differentially private.
    \item The event that $\norm{x}_p^p > 2k\sigma^p/p$, causing us to release the database, which we recall from the Proof of Theorem~\ref{thm:main-1} happens with probability at most $2^{-\Omega(k/p)} \leq \delta/3$.
\end{itemize}

By composition, we get that the mechanism is $(\epsilon, \delta)$-differentially private as desired.

To show the tail bound on $\ell_\infty$-error: If $\norm{x}_p^p > 2k\sigma^p/p$, then we have $\tilde{d} = d$, so trivially the tail bound is satisfied. So, it suffices to show that conditional on $\norm{x}_p^p \leq 2k\sigma^p/p$ occurring, we have the tail bound. By a union bound, the guarantees of Theorem~\ref{thm:sv} give that $\norm{\tilde{d} - d}_\infty = \norm{x - \hat{x}}_\infty \leq \alpha_{SV}$ (i.e the tail bound is satisfied) if at most $4k / \log^{2+2t} k$ entries of $x$ have absolute value greater than $\alpha_{SV}/2$ with probability less than, say, $e^{-2 \log^t k}$. Using $r = 3 t \frac{(\log \log k)^{1/p}}{k^{1/p}}$ in Lemma~\ref{lemma:spherecap} and a union bound with the $1-\delta/3$ probability event that $\norm{x}_p \leq (2k/p)^{1/p}\sigma$, for each coordinate $x_i$ of $x$ we have:

$$|x_i| \geq \alpha_{SV}/2 = 6t (\log \log k)^{1/p} \sigma  = 2rk^{1/p} \sigma \geq r \norm{x}_p,$$

with probability at most $\frac{1}{\log^{2+2t} k} + 2^{-\Omega(k/p)}  \leq \frac{2}{\log^{2+2t} k}$. Since we sample $x$ with probability proportional to $\exp(-\sum_{i \in [k]} |x_i|^p / \sigma^p)$, each coordinate's distribution is independent, so using a Chernoff bound we conclude that with probability $e^{-\Omega(k / \log^{2+2t} k)} \leq e^{-2 \log^t k}$ at most $4k / \log^{2+2t} k$  coordinates have absolute value greater than $\alpha_{SV}$ as desired. 
\end{proof}
This proves Theorem~\ref{thm:main-2-simple}, up to some details which we defer to Section~\ref{section:deferred-proofs}.
\section*{Acknowledgements}
The inspiration for this project was a suggestion by Kunal Talwar that Generalized Gaussians could achieve the same asymptotic worst-case errors for query response as the mechanism of Steinke and Ullman. In particular, he suggested a proof sketch of a statement similar to Lemma~\ref{lemma:sigmachoice} which was the basis for our proof that lemma.

\bibliographystyle{alpha}
\bibliography{ref}

\appendix
\section{Deferred Proofs}\label{section:deferred-proofs}
\subsection{Proof of Lemma~\ref{lemma:spherecap}}
 To prove this lemma we'll need the following lemma about convex bodies. 

\begin{lemma}\label{lemma:convexbodies}
Let $A \subseteq B \subset \mathbb{R}^k$ be two compact convex bodies with $A$ contained in $B$, and $A', B'$ be their respective boundaries. Then $\text{Vol}_{k-1}(A') \leq \text{Vol}_{k-1}(B')$, where $\text{Vol}_{k-1}$ denotes the $(k-1)$-dimensional volume.
\end{lemma}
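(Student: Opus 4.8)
\emph{Proof proposal.} The plan is to exhibit a volume-nonincreasing surjection from $B'$ onto $A'$, namely the metric (nearest-point) projection onto the smaller body. Since $A$ is compact and convex, for every $y \in \mathbb{R}^k$ there is a unique $\pi_A(y) \in A$ minimizing $\|y - z\|_2$ over $z \in A$, and the map $\pi_A : \mathbb{R}^k \to A$ is nonexpansive, i.e. $\|\pi_A(y) - \pi_A(y')\|_2 \le \|y - y'\|_2$; this is the standard fact about projection onto a closed convex set, characterized variationally by $\langle y - \pi_A(y), z - \pi_A(y)\rangle \le 0$ for all $z \in A$. I would also record two measure-theoretic facts used as black boxes: (a) a $1$-Lipschitz map $f : \mathbb{R}^k \to \mathbb{R}^k$ satisfies $\mathcal{H}^{k-1}(f(E)) \le \mathcal{H}^{k-1}(E)$ for every $E$, immediate from the definition of Hausdorff measure; and (b) for a compact convex body the surface area $\mathrm{Vol}_{k-1}$ of its boundary agrees with $\mathcal{H}^{k-1}$ of that boundary, since the boundary is locally the graph of a (convex, hence locally Lipschitz) function.

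The heart of the argument is to show $A' \subseteq \pi_A(B')$. Fix $a \in A'$. If $a \in B'$, then $\pi_A(a) = a$ and we are done, so assume $a \in \mathrm{int}(B)$. Choose an outward unit normal $v$ to some supporting hyperplane of $A$ at $a$, which exists by the supporting hyperplane theorem (if $A$ is not full-dimensional, work inside its affine hull). The ray $\{a + sv : s \ge 0\}$ starts inside the compact set $B$, so it exits $B$ at some point $b = a + s_0 v$ with $s_0 > 0$ and $b \in B'$. For every $z \in A$ we have $\langle b - a, z - a\rangle = s_0 \langle v, z - a\rangle \le 0$ because $v$ is an outward normal to $A$ at $a$; by the variational characterization above this says exactly $\pi_A(b) = a$. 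Hence every boundary point of $A$ is hit, i.e. $A' \subseteq \pi_A(B')$. (In fact one also has $\pi_A(B') \subseteq A'$, since no point of $\partial B$ can lie in $\mathrm{int}(A)$, so $\pi_A(B') = A'$, though only the one inclusion is needed.)

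Putting the pieces together,
$$\mathrm{Vol}_{k-1}(A') \le \mathrm{Vol}_{k-1}\big(\pi_A(B')\big) \le \mathrm{Vol}_{k-1}(B'),$$
where the first inequality uses $A' \subseteq \pi_A(B')$ and monotonicity of $\mathcal{H}^{k-1}$, and the second uses that $\pi_A$ is $1$-Lipschitz together with fact (a); fact (b) identifies $\mathrm{Vol}_{k-1}$ with $\mathcal{H}^{k-1}$ on both boundaries and closes the argument.

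I expect the main obstacle to be bookkeeping rather than a genuine difficulty: cleanly justifying the identification of classical surface area with $\mathcal{H}^{k-1}$ and the behaviour of Hausdorff measure under Lipschitz maps, and handling degenerate cases where $A$ or $B$ fails to be full-dimensional (dispatched by passing to affine hulls or by a routine approximation argument). A slicker alternative that sidesteps the projection construction is to invoke Cauchy's surface area formula $\mathrm{Vol}_{k-1}(\partial K) = c_k \int_{S^{k-1}} \mathrm{Vol}_{k-1}(\mathrm{proj}_{u^\perp} K)\,du$, valid for all convex bodies $K$: since $A \subseteq B$ gives $\mathrm{proj}_{u^\perp} A \subseteq \mathrm{proj}_{u^\perp} B$ with both projections convex, each integrand for $A$ is at most that for $B$, and integrating over $u$ yields the claim; this trades the elementary projection argument for Cauchy's formula as a black box.
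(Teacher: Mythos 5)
Your proposal is correct, but your primary argument is genuinely different from the paper's. The paper proves the lemma in one line from Cauchy's surface-area formula (citing Section 5.5 of Klain--Rota): $\text{Vol}_{k-1}(\partial K)$ is proportional to $\int_{\mathbb{S}^{k}} \text{Vol}_{k-1}(\pi_{\theta^\top} K)\,\D{\theta}$, and $A \subseteq B$ makes each projection of $A$ contained in that of $B$, so the integrands compare pointwise --- this is exactly the ``slicker alternative'' you relegate to your final paragraph. Your main route instead uses the metric projection $\pi_A$ onto $A$: it is nonexpansive, a $1$-Lipschitz map does not increase $\mathcal{H}^{k-1}$, and your ray-plus-supporting-hyperplane argument correctly shows $A' \subseteq \pi_A(B')$ (the variational characterization $\langle b - \pi_A(b), z - \pi_A(b)\rangle \le 0$ is applied correctly, and the exit point of the ray exists by compactness of $B$), so $\mathcal{H}^{k-1}(A') \le \mathcal{H}^{k-1}(B')$. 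The trade-off: your argument is self-contained modulo standard facts about Hausdorff measure (Lipschitz non-expansion, and the identification of classical surface area with $\mathcal{H}^{k-1}$ on convex boundaries, which does need the bookkeeping you flag), whereas the paper's argument outsources all the measure theory to a single classical integral-geometric formula and reduces the lemma to monotonicity of projections. Both are standard and sound; for the full-dimensional bodies the paper actually applies the lemma to, the degenerate lower-dimensional cases you mention never arise.
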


\begin{proof}
For any compact convex body $S$ and its boundary $S'$, the $(k-1)$-dimensional volume of $S'$ satisfies:

$$\text{Vol}_{k-1}(S') \propto \int_{\mathbb{S}^{k}} \text{Vol}_{k-1} (\pi_{\theta^\top} S) \D{\theta},$$

Where $\mathbb{S}^{k}$ is the $k$-dimensional unit sphere and $\pi_{\theta^\top}S$ is the orthogonal projection of $S$ onto the subspace of $\mathbb{R}^k$ orthogonal to $\theta$ (see e.g. Section 5.5 of \cite{KlainRdB97} for a proof of this fact). Since $A \subseteq B$ it follows that for all $\theta$ we have $\text{Vol}_{k-1} (\pi_{\theta^\top} A) \leq \text{Vol}_{k-1} (\pi_{\theta^\top} B)$ and so $\text{Vol}_{k-1}(A') \leq \text{Vol}_{k-1}(B')$.
\end{proof}

The idea behind the proof of Lemma~\ref{lemma:spherecap} is to show that the region of the $\ell_p$-ball with large positive first coordinate is contained within a smaller $\ell_p$-ball, and then apply Lemma~\ref{lemma:convexbodies}:

\begin{proof}[Proof of Lemma~\ref{lemma:spherecap}]
By rescaling, we can assume $\norm{x}_p = 1$ and instead show:
$$\Pr[|x_1| \geq r] \leq \left(1 - r^p\right)^{(k-1)/p}$$

$$\Pr[|x_1| \geq r] = \frac{\text{Vol}_{k-1}\left(\{x: |x_1| \geq r, \norm{x}_p = 1\} \right)}{\text{Vol}_{k-1}\left(x: \norm{x}_p = 1\right)}= \frac{\text{Vol}_{k-1}\left(\{x: x_1 \geq r, \norm{x}_p = 1\} \right)}{\text{Vol}_{k-1}\left(\{x: x_1 \geq 0, \norm{x}_p = 1\}\right)},$$ 

Where $\text{Vol}_{k-1}$ denotes the $(k-1)$-dimensional volume. To bound this ratio, let $v$ be the vector $(r, 0, 0, \ldots, 0)$, and consider the (compact, convex) body $B_1 = \{x: x_1 \geq r, \norm{x - v}_p \leq (1 - r^p)^{1/p}\}$. We have $r^p + (v-r)^p \leq v^p$ for $0 \leq r \leq v$, so $B_1$ contains the (also compact, convex) body $B_2 = \{x: x_1 \geq r, \norm{x}_p \leq 1\}$. Then by Lemma~\ref{lemma:convexbodies} the $(k-1)$-dimensional surface area of $B_1$ is larger than that of $B_2$. The boundary of $B_1$ is the union of the bodies $B_{1,a} := \{x: x_1 = r, \norm{x - v}_p \leq (1 - r^p)^{1/p}\} $ and $B_{1,b} := \{x: x_1 \geq r, \norm{x - v}_p = (1-r^p)^{1/p}\}$, whose intersection has $(k-1)$-dimensional volume 0. Similarly, the boundary of $B_2$ is the union of the bodies $B_{2,a} := \{x: x_1 = r, \norm{x}_p \leq 1\}$ and $B_{2,b} := \{x: x_1 \geq r, \norm{x}_p = 1\}$, whose intersection has $(k-1)$-dimensional volume $0$. See Figure~\ref{fig:spherecap} for an example of a picture of all of these bodies.

\begin{figure}
    \centering
    \includegraphics[width=.5\textwidth]{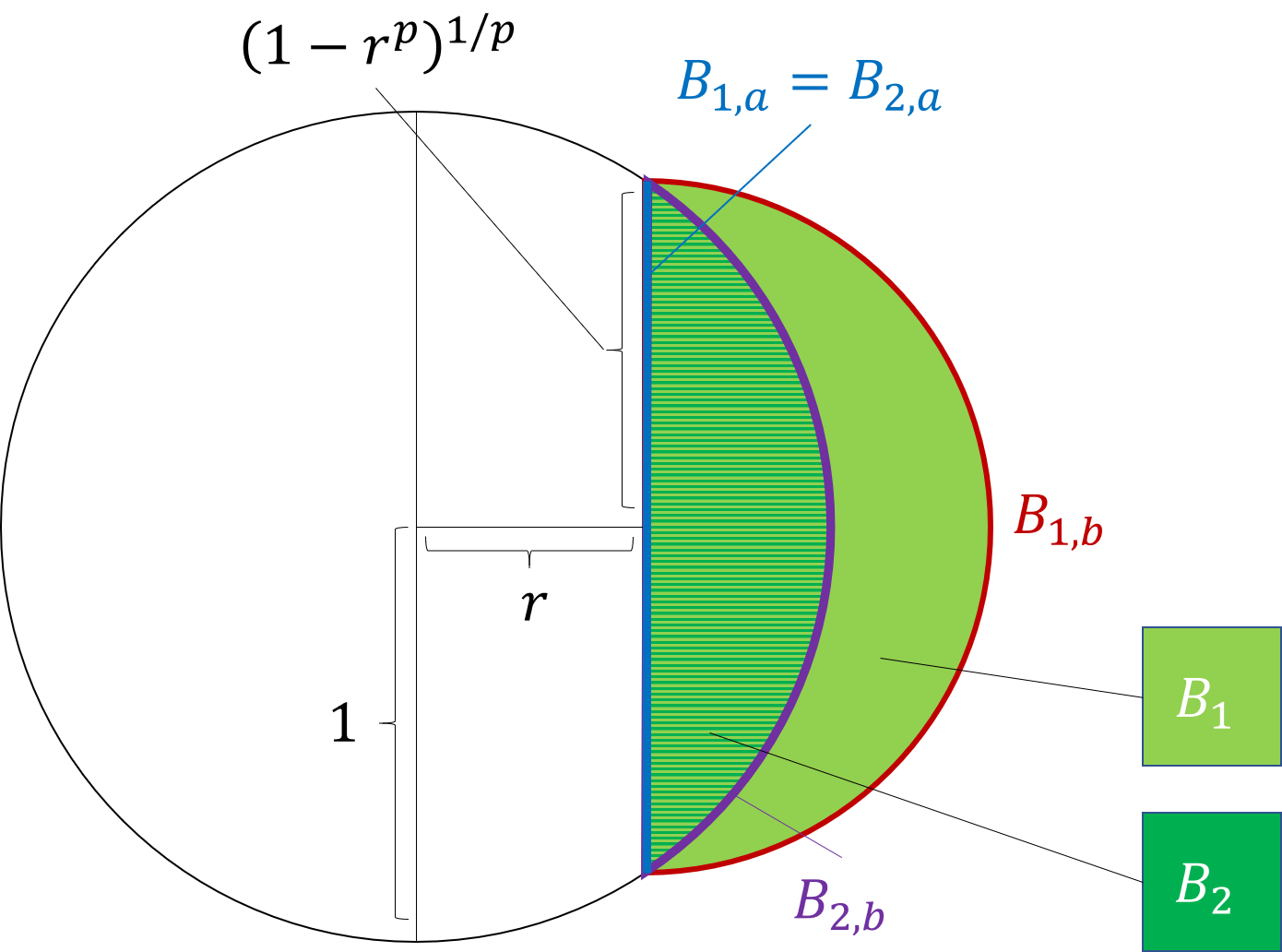}
    \caption{A picture of the bodies in the proof of Lemma~\ref{lemma:spherecap} for $p = 2, k = 2$. $B_2$ has stripes that are the same color as $B_1 \setminus B_2$ to emphasize that $B_1$ contains $B_2$.}
    \label{fig:spherecap}
\end{figure}

Nothing that $B_{1,a} = B_{2,a}$, we conclude that $\text{Vol}_{k-1}(B_{1,b}) \geq \text{Vol}_{k-1}(B_{2,b})$. Now we have:

$$ \frac{\text{Vol}_{k-1}\left(\{x: x_1 \geq r, \norm{x}_p = 1\} \right)}{\text{Vol}_{k-1}\left(\{x: x_1 \geq 0, \norm{x}_p = 1\}\right)} \leq \frac{\text{Vol}_{k-1}(\{x: x_1 \geq r, \norm{x - v}_p = (1-r^p)^{1/p}\})}{\text{Vol}_{k-1}\left(\{x: x_1 \geq 0, \norm{x}_p = 1\}\right)}.$$

The body in the numerator of the final expression is the body in the denominator, but shifted by $v$ and rescaled by $(1-r^p)^{1/p}$ in every dimension. So, the final ratio is at most $(1-r^p)^{(k-1)/p}$.
\end{proof}

\subsection{Proof of Theorem~\ref{thm:main-1-simple}}
We first need the following corollary of Lemma~\ref{lemma:spherecap}:

\begin{corollary}\label{corollary:spherecap3}
Let $x$ be chosen uniformly at random from a $k$-dimensional $\ell_p$-sphere with arbitrary radius for $p \geq 1$. Then we have:
$$\ex[\norm{x}_\infty] \leq \frac{5 \log^{1/p} k}{k^{1/p}} \norm{x}_p$$
\end{corollary}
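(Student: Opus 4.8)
The plan is to integrate the tail bound of Lemma~\ref{lemma:spherecap} (applied after the union bound over $k$ coordinates, i.e.\ Corollary~\ref{corollary:spherecap2}) against the layer-cake formula $\ex[\norm{x}_\infty] = \int_0^\infty \Pr[\norm{x}_\infty \ge s]\,\D s$. By rescaling we may as well take $\norm{x}_p = 1$ and bound $\ex[\norm{x}_\infty] = \int_0^1 \Pr[\norm{x}_\infty \ge r]\,\D r$, since $\norm{x}_\infty \le \norm{x}_p = 1$ always. Corollary~\ref{corollary:spherecap2} gives $\Pr[\norm{x}_\infty \ge r] \le \min\{1,\ k\exp(-(k-1)r^p/p)\}$.

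First I would locate the threshold $r_0$ at which the bound $k\exp(-(k-1)r^p/p)$ drops below $1$; solving $k \exp(-(k-1)r_0^p/p) = 1$ gives $r_0 = \left(\frac{p\ln k}{k-1}\right)^{1/p}$, which for $p\ge 1$ and $k$ not too small is at most roughly $\frac{2\log^{1/p} k}{k^{1/p}}$ (one should be a little careful passing from $\ln$ to $\log$ and from $k-1$ to $k$, but a constant of $2$ is comfortably safe; this constant-chasing is where the final constant $5$ will come from). For $r \le r_0$ I would just bound $\Pr[\norm{x}_\infty \ge r]\le 1$, contributing at most $r_0$ to the integral. For $r \ge r_0$ I would bound the integral $\int_{r_0}^1 k\exp(-(k-1)r^p/p)\,\D r$; substituting $u = (k-1)r^p/p$ turns this into (a constant times) $\frac{p^{1/p}}{(k-1)^{1/p}}\int_{\ln k}^{\infty} k e^{-u} u^{1/p-1}\,\D u$, and since $u^{1/p-1}\le (\ln k)^{1/p-1}\le 1$ on that range and $\int_{\ln k}^\infty k e^{-u}\,\D u = 1$, this tail integral is at most $O\!\left(\frac{p^{1/p}}{(k-1)^{1/p}}\right)$, which is again of order $\frac{\log^{1/p}k}{k^{1/p}}$ up to an absolute constant (using $p^{1/p}\le e^{1/e}$ and $p\le \log k$). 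Adding the two pieces and absorbing constants yields $\ex[\norm{x}_\infty]\le \frac{5\log^{1/p}k}{k^{1/p}}\norm{x}_p$ after undoing the rescaling.

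The main obstacle is not conceptual but bookkeeping: making sure the constant really is at most $5$ uniformly over the allowed range of $p$ and $k$. The delicate points are (i) the passage between natural log and the paper's $\log$ (presumably base $2$), which inflates things by a factor of $\ln 2$ or $1/\ln 2$; (ii) the $(k-1)$ versus $k$ in the exponent, which is harmless for $k\ge 2$ but needs $k$ bounded away from $1$; and (iii) controlling the incomplete-Gamma-type tail integral $\int_{\ln k}^\infty u^{1/p-1}e^{-u}\,\D u$ cleanly --- here the clean move is exactly to note $u^{1/p-1}\le (\ln k)^{1/p-1}\le 1$ for $u\ge \ln k\ge 1$, so that one never needs a sharp Gamma-function estimate. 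One could alternatively avoid splitting at $r_0$ by observing that $\min\{1, ke^{-(k-1)r^p/p}\}$ is dominated by $e^{1-(k-1)r^p/(p\log k)\cdot \log k}$-type expressions, but the two-region split is the most transparent route. I would also remark that the same argument with the factor $k$ replaced by $1$ (i.e.\ using Lemma~\ref{lemma:spherecap} directly rather than Corollary~\ref{corollary:spherecap2}) gives the analogous bound for $\ex[|x_1|]$, though only the $\norm{x}_\infty$ version is needed downstream.
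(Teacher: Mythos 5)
Your proposal is correct and follows essentially the same route as the paper: apply the layer-cake formula to $\norm{x}_\infty/\norm{x}_p$, split the integral at a threshold of order $\log^{1/p}k / k^{1/p}$, bound the probability by $1$ below it, and integrate the tail bound from Corollary~\ref{corollary:spherecap2} above it. The only cosmetic difference is that you control the tail integral via the substitution $u=(k-1)r^p/p$ and the estimate $u^{1/p-1}\le 1$, whereas the paper simply bounds the integrand pointwise by $k e^{-2\log k}=1/k$ on the tail region; both give the same order and an acceptable constant.
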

\begin{proof}
Since $\norm{x}_\infty/\norm{x}_p$ takes values in $[0, 1]$, by Lemma~\ref{lemma:spherecap} we have:

\begin{align*}
\ex[\norm{x}_\infty / \norm{x}_p] &= \int_0^1 \Pr[\norm{x}_\infty / \norm{x}_p \geq r] \D{r}\\
&\leq \int_0^{\frac{2^{1+1/p} \log^{1/p} k}{k^{1/p}}} 1 \D{r} + \int_{\frac{2^{1+1/p} \log^{1/p} k}{k^{1/p}}}^1  k \cdot \exp\left(-\frac{(k-1)r^p}{p}\right) \D{r}\\
&\leq \frac{2^{1+1/p} \log^{1/p} k}{k^{1/p}} + \int_{\frac{2^{1+1/p} \log^{1/p} k}{k^{1/p}}}^1  k \cdot \exp\left(-\frac{(k-1)2^{p+1} \log k}{kp}\right) \D{r}\\
&\leq \frac{2^{1+1/p} \log^{1/p} k}{k^{1/p}} + \int_{\frac{2^{1+1/p} \log^{1/p} k}{k^{1/p}}}^1  k \cdot \exp\left(-2 \log k\right) \D{r}\\
&\leq \frac{2^{1+1/p} \log^{1/p} k}{k^{1/p}} + \frac{1}{k}\\
&\leq \frac{5 \log^{1/p} k}{k^{1/p}}.
\end{align*}

Here we use that $2^p \geq p$ for all $p \geq 1$ and that $(1-\frac{c}{x})^x \leq e^{-c}$ for all $c \geq 0$.
\end{proof}

\begin{proof}[Proof of Theorem~\ref{thm:main-1-simple}]
We use Theorem~\ref{thm:main-1} after rounding $p$ up to the nearest even integer (this loses at most a multiplicative constant in the resulting error bounds). If the constant hidden in $\Theta(\log \log k)$ is a sufficiently large function of $c_1$, this gives the desired tail bound, up to the additive $e^{-.001 k / p}$ in the probability bound (which may be larger than the $e^{-t^p \log k}$ term for large values of $p$). To remove the additive $e^{-.001 k / p}$: if the less than $e^{-.001 k / p} \leq \delta$ probability event that $(\norm{x}_p/\sigma)^p$ exceeds $2k/p$ occurs, we can instead just output $\tilde{d} = d$, i.e. instead set $x = 0$. This gives an $(\epsilon, 2\delta)$-private mechanism that always satisfies $(\norm{x}_p/\sigma)^p \leq 2k/p$, and then we can rescale our choice of $\delta$ appropriately. The tail bound can now be derived as in the proof of Theorem~\ref{thm:main-1}. Similarly, since we always have $(\norm{x}_p/\sigma)^p \leq 2k/p$, the expectation of $\norm{x}_\infty$ follows from Corollary~\ref{corollary:spherecap3}. Finally, the expectation of $\norm{x}_q$ for $1 \leq q \leq p$ follows by using Jensen's inequality twice and the unconditional upper bound on $\norm{x}_p^p$:

$$\ex[\norm{x}_q] \leq \ex[\norm{x}_q^q]^{1/q} = k^{1/q} \ex[|x_1|^q]^{1/q} \leq k^{1/q} \ex[|x_1|^p]^{1/p} = k^{1/q - 1/p} \ex[\norm{x}_p^p]$$
$$\leq  k^{1/q-1/p} \cdot (2k/p)^{1/p} \sigma = O(k^{1/q} \sigma).$$
\end{proof}

\subsection{Proof of Theorem~\ref{thm:main-2-simple}}

\begin{proof}[Proof of Theorem~\ref{thm:main-2-simple}]
The tail bound in Theorem~\ref{thm:main-2-simple} follows immediately from Theorem~\ref{thm:main-2} by choosing $p$ to be an even integer satisfying $p = \Theta(\log \log \log k)$. 

For the expectation, we use the tail bound of Theorem~\ref{thm:main-2-simple}. We have:

$$\ex_{\tilde{d} \sim \mathcal{M}(d)}\left[\norm{\tilde{d} - d}_\infty \right] = \int_0^\infty \Pr[\norm{\tilde{d} - d}_\infty \geq s] \D{s}$$
$$= \int_0^a \Pr[\norm{\tilde{d} - d}_\infty \geq s] \D{s} + \int_a^b \Pr[\norm{\tilde{d} - d}_\infty \geq s] \D{s} + \int_b^\infty \Pr[\norm{\tilde{d} - d}_\infty \geq s] \D{s}.$$

We choose $a = \frac{2c \sqrt{k \log \log \log k \log (1/\delta)}}{\epsilon}$, $b = \frac{k \sqrt{\log (1/\delta)}}{\epsilon}$. The integral over $[0, a]$ is of course bounded by $a$. By Theorem~\ref{thm:main-2}, the integral over $[a, b]$ is bounded by $b \cdot e^{-\log^2 k} \leq \frac{\sqrt{\log(1/\delta)}}{\epsilon} \leq a$.  Finally, to bound the third term, recall that the mechanism of Theorem~\ref{thm:main-2} outputs $d$ (i.e. effectively chooses $x, \hat{x} = 0$ instead) if $\norm{x}_p$ is too large. So, unconditionally we have:

$$\norm{x}_\infty \leq \norm{x}_p \leq (2k/p)^{1/p} \sigma \leq \frac{2c \sqrt{k \log \log \log k \log(1/\delta)} }{\epsilon} \leq b.$$

So by the third property in Theorem~\ref{thm:sv} we have for $s \in [b, \infty)$:

$$\Pr_{\tilde{d}\sim \mathcal{M}(d)}[\norm{\tilde{d} - d}_\infty \geq s] = \Pr_{x, \hat{x}}[\norm{x - \hat{x}}_\infty \geq s] \leq k e^{-\Omega(s / (\sqrt{k \log(1/\delta)} / \epsilon))}.$$

And so by change of variables, with $s' = s / (\sqrt{k \log(1/\delta)} / \epsilon)$:
$$\int_b^\infty \Pr[\norm{\tilde{d} - d}_\infty \geq s] \D{s} \leq \frac{\sqrt{k \log(1/\delta)}}{\epsilon} \int_{\sqrt{k}}^\infty ke^{-\Omega(s')} \D{s'} \leq \frac{k^{1.5} \sqrt{\log(1/\delta)}}{\epsilon} \cdot e^{-\Omega(\sqrt{k})} \leq a.$$

So we conclude 

$$\ex_{\tilde{d} \sim \mathcal{M}(d)}\left[\norm{\tilde{d} - d}_\infty\right] \leq 3a = O\left(\frac{\sqrt{k \log \log \log k \log (1/\delta)}}{\epsilon}\right),$$

as desired.
\end{proof} 
\section{Concentration of Generalized Gammas}\label{section:generalizedgammas}

In this section we consider the Generalized Gamma random variable $GGamma(a, b)$ parameterized by $a, b$ with pdf:

$$p(x) = \frac{b x^{a-1} e^{-x^b}}{\Gamma(a/b)}, x \in (0, \infty).$$

Where the Gamma function $\Gamma(x)$ is defined over the positive reals as

$$\Gamma(z) = \int_0^\infty x^{z-1} e^{-x} \D{x}.$$

We recall that $\Gamma(z)$ is a continuous analog of the factorial in that it satisfies $\Gamma(x+1) = x \cdot \Gamma(x)$. When $b = 1$, $GGamma(a, b)$ is exactly the Gamma random variable $Gamma(a)$ (we will use $Gamma$ to denote the random variable and $\Gamma$ to denote the function to avoid ambiguous notation). 

We want to show that sums of $GGamma(\frac{1}{p-1}, \frac{p}{p-1})$ random variables concentrate nicely. To do this, we will show that they are sub-gamma:

To show that $GGamma(\frac{1}{p-1}, \frac{p}{p-1})$ are sub-gamma, we will relate the moment-generating function of $GGamma(\frac{1}{p-1}, \frac{p}{p-1})$ to that of the Gamma random variable with the same mean using the following facts:

\begin{fact}
For a Generalized Gamma random variable $X \sim GGamma(a, b)$ the moments are $\ex[X^r] = \frac{\Gamma((a+r)/b)}{\Gamma(a/b)}$. In particular, for a Gamma random variable $X \sim Gamma(a)$ the moments are $\ex[X^r] = \frac{\Gamma(a+r)}{\Gamma(a)}$.
\end{fact}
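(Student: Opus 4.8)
The plan is to compute $\ex[X^r]$ directly from the $GGamma(a,b)$ density and reduce the resulting integral to the definition of $\Gamma$ via a change of variables. First I would write
$$\ex[X^r] = \int_0^\infty x^r \cdot \frac{b\, x^{a-1} e^{-x^b}}{\Gamma(a/b)}\, \D{x} = \frac{b}{\Gamma(a/b)} \int_0^\infty x^{a+r-1} e^{-x^b}\, \D{x},$$
and then substitute $u = x^b$, so that $x = u^{1/b}$, $\D{x} = \tfrac{1}{b} u^{1/b-1}\,\D{u}$, and $x^{a+r-1} = u^{(a+r-1)/b}$. The factor $b$ cancels and the exponents combine as $(a+r-1)/b + 1/b - 1 = (a+r)/b - 1$, yielding
$$\ex[X^r] = \frac{1}{\Gamma(a/b)} \int_0^\infty u^{(a+r)/b - 1} e^{-u}\, \D{u} = \frac{\Gamma((a+r)/b)}{\Gamma(a/b)}$$
by the definition of the Gamma function. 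Taking $r = 0$ also re-derives that the stated density integrates to $1$, and setting $b = 1$ gives the claimed Gamma moments $\ex[X^r] = \Gamma(a+r)/\Gamma(a)$.

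Since this is an elementary monotone change of variables on $(0,\infty)$ (valid because $b > 0$), there is essentially no obstacle; the only point to note is convergence of the final integral, which holds whenever $(a+r)/b > 0$, and in particular for every parameter regime used in the paper ($a, b > 0$ and $r \geq 0$, including non-integer $r$). Alternatively, one could avoid writing the substitution explicitly by applying the ``Change of Variables for Powers'' fact from the Preliminaries to identify $X^b$ as having pdf proportional to $u^{a/b - 1} e^{-u}$, i.e.\ $X^b \sim Gamma(a/b)$, and then reading off $\ex[X^r] = \ex[(X^b)^{r/b}] = \Gamma(a/b + r/b)/\Gamma(a/b)$ from the Gamma moment formula.
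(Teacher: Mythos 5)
Your computation is correct: the direct substitution $u = x^b$ turns the moment integral into the definition of $\Gamma\bigl((a+r)/b\bigr)$, and the exponent bookkeeping, the cancellation of the factor $b$, and the convergence condition $(a+r)/b > 0$ are all right. Note that the paper does not prove this fact at all --- it simply cites Section 17.8.7 of Johnson--Kotz--Balakrishnan --- so your derivation is a self-contained replacement for that citation, and it is the standard one. One small caveat on your alternative route: reading off $\ex[(X^b)^{r/b}]$ from ``the Gamma moment formula'' for non-integer exponents is itself the $b=1$ instance of the very fact being proved, so that variant is mildly circular unless you prove the Gamma moment formula by the same integral; the direct substitution you lead with is the cleaner argument and suffices on its own.
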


See e.g. Section 17.8.7 of \cite{JohnsonKB95} for a derivation of this fact. Note here that $GGamma(\frac{1}{p-1}, \frac{p}{p-1})$ has mean $\mu = 1/\Gamma(1/p)$. To relate the moments of Generalized Gamma random variables to Gamma random variables' we note the following about $\mu$:

\begin{fact}\label{fact:ggmean}
For all $p \geq 2$, we have $\frac{1}{p} \leq \frac{1}{\Gamma(1/p)} \leq \frac{1.2}{p}$.
\end{fact}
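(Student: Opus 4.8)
The plan is to reduce the claim to a statement about $\Gamma$ on the short interval $(1,3/2]$ and then pin it down using convexity of $\log\Gamma$. Applying the functional equation $\Gamma(z+1)=z\,\Gamma(z)$ with $z=1/p$ gives $\Gamma(1/p)=p\,\Gamma(1+1/p)$, so $\frac{1}{\Gamma(1/p)}=\frac1p\cdot\frac{1}{\Gamma(1+1/p)}$, and the desired chain $\frac1p\le\frac{1}{\Gamma(1/p)}\le\frac{1.2}{p}$ is equivalent to $\frac{1}{1.2}\le\Gamma(1+1/p)\le 1$. Since $p\ge 2$, the argument $w=1+1/p$ ranges over $(1,3/2]$, so it suffices to prove $\Gamma(w)\in[5/6,1]$ for all $w\in(1,3/2]$.

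For the upper bound I would use that $\log\Gamma$ is convex on $(0,\infty)$ (standard; e.g.\ $(\log\Gamma)''(w)=\sum_{n\ge0}(w+n)^{-2}>0$, or Hölder's inequality applied to the integral representation). Because $\log\Gamma(1)=\log\Gamma(2)=0$, the chord of $\log\Gamma$ over $[1,2]$ is identically $0$, and convexity forces $\log\Gamma(w)\le 0$, i.e.\ $\Gamma(w)\le 1$, for every $w\in[1,2]\supseteq(1,3/2]$. (Equivalently: $\Gamma(1+t)=\ex[X^t]$ for $X$ exponential of mean $1$, and $x\mapsto x^t$ is concave for $t\in[0,1]$, so Jensen's inequality gives $\Gamma(1+t)\le(\ex[X])^t=1$.)

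The lower bound is the delicate step: crude estimates do not clear the threshold $5/6\approx 0.833$. Jensen's inequality alone gives only $\Gamma(1+t)=\ex[e^{t\ln X}]\ge e^{t\,\ex[\ln X]}=e^{-\gamma t}\ge e^{-\gamma/2}\approx 0.75$, and bounds obtained from the reflection formula $\Gamma(1+t)\Gamma(1-t)=\pi t/\sin(\pi t)$ together with monotonicity of $\Gamma$ are likewise slightly too weak. Instead I would invoke the tangent-line inequality for the convex function $\log\Gamma$ at $w_0=3/2$, where both $\Gamma(3/2)=\tfrac{\sqrt\pi}{2}$ and $\psi(3/2)=2-\gamma-2\ln 2$ have clean closed forms. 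Convexity gives, for all $w\in[1,3/2]$,
$$\log\Gamma(w)\ \ge\ \log\tfrac{\sqrt\pi}{2}+(2-\gamma-2\ln 2)\bigl(w-\tfrac32\bigr);$$
since $2-\gamma-2\ln 2>0$, the right-hand side is minimized at $w=1$, yielding $\log\Gamma(w)\ge\tfrac12\ln\pi-1+\tfrac\gamma2\approx -0.139$, hence $\Gamma(w)\ge e^{\frac12\ln\pi-1+\gamma/2}\approx 0.870>5/6$.

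Combining the two estimates gives $\Gamma(1+1/p)\in(5/6,1]$ for all $p\ge2$, which via the reduction above is exactly the Fact. The only real obstacle is the lower bound: one must pick the expansion point ($w_0=3/2$, not $t=0$, and not plain Jensen) carefully enough that the constant beats $5/6$; everything else is routine.
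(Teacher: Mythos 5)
Your proof is correct, and it is worth noting that the paper itself offers no argument for this fact at all---it is stated as a bare ``Fact,'' implicitly relying on standard knowledge of the Gamma function (namely that $\Gamma(w)\le 1$ on $[1,2]$ and that the global minimum of $\Gamma$ on $(0,\infty)$ is $\Gamma(x_{\min})\approx 0.8856>5/6$). Your reduction via $\Gamma(1/p)=p\,\Gamma(1+1/p)$ to showing $\Gamma(w)\in[5/6,1]$ for $w\in(1,3/2]$ is exactly the right normalization, the chord argument (or the Jensen form $\Gamma(1+t)=\ex[X^t]\le(\ex[X])^t$) cleanly gives the upper bound, and the tangent-line bound for the convex function $\log\Gamma$ at $w_0=3/2$ is a nice self-contained way to beat the threshold: the constants check out, since $\psi(3/2)=2-\gamma-2\ln 2>0$ so the bound over $[1,3/2]$ is worst at $w=1$, giving $\log\Gamma(w)\ge\frac12\ln\pi-1+\frac\gamma2\approx-0.139>\ln(5/6)\approx-0.182$. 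The only step resting on numerics is this last comparison of explicit constants ($\pi$, $\gamma$), which is an acceptable level of rigor---the paper's own appendix argues similarly (e.g.\ ``$g(2)\approx .6672$''). So your write-up supplies a complete proof where the paper supplies none, and correctly identifies that the delicate point is clearing $5/6$ on the lower side; a crude Jensen bound ($e^{-\gamma/2}\approx 0.75$) indeed does not suffice.
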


Putting it all together, we get the following lemmas, which combined with Fact~\ref{fact:ggmean} give us Lemma~\ref{lemma:ggconcentration}:

\begin{lemma}
Let $Y = GGamma(\frac{1}{p-1}, \frac{p}{p-1})$ for $p \geq 2$. Then, for $\mu = \ex[Y] = \frac{1}{\Gamma(1/p)}$, we have $Y \in \Gamma^+(\mu, 1)$.
\end{lemma}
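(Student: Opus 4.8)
The plan is to bound the moment-generating function $\ex[\exp(\lambda(Y-\mu))]$ for $\lambda \in (0,1)$ by expanding it as a power series in $\lambda$, replacing each moment $\ex[Y^r]$ of the Generalized Gamma with a quantity controlled by the moments of the $Gamma(\mu)$-like reference random variable, and checking that the resulting series is dominated term-by-term by the sub-gamma bound $\exp(\lambda^2\mu/(2(1-\lambda)))$. Concretely, using the moment formula $\ex[Y^r] = \Gamma((\frac{1}{p-1}+r)/\frac{p}{p-1})/\Gamma(\frac{1}{p-1}/\frac{p}{p-1}) = \Gamma(\frac1p + \frac{(p-1)r}{p})/\Gamma(\frac1p)$, I would write $\ex[e^{\lambda Y}] = \sum_{r\ge 0} \frac{\lambda^r}{r!}\cdot\frac{\Gamma(\frac1p + \frac{(p-1)r}{p})}{\Gamma(\frac1p)}$ and compare this with $\ex[e^{\lambda Z}]$ where $Z \sim Gamma(\mu)$, which has moments $\ex[Z^r] = \Gamma(\mu+r)/\Gamma(\mu)$ and is known (from the cited Fact) to lie in $\Gamma^+(\mu,1)$, giving exactly $\ex[e^{\lambda(Z-\mu)}] \le \exp(\lambda^2\mu/(2(1-\lambda)))$ for $\lambda\in(0,1)$.

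The key step is therefore the moment comparison: show $\ex[Y^r] \le \ex[Z^r]$ for all integers $r \ge 1$, i.e. $\Gamma(\frac1p + \frac{(p-1)r}{p})/\Gamma(\frac1p) \le \Gamma(\mu+r)/\Gamma(\mu)$ where $\mu = 1/\Gamma(1/p) \in [\frac1p, \frac{1.2}{p}]$. Since the reference Gamma's $r$-th moment is $\mu(\mu+1)\cdots(\mu+r-1)$ and, by Fact~\ref{fact:ggmean}, $\mu \ge 1/p$, it suffices to show $\ex[Y^r] \le \frac1p\cdot(1+\frac1p)\cdots(r-1+\frac1p)$ — actually one wants to be a bit careful and keep $\mu$ rather than $1/p$ on the right since the $\mu^2$ coefficient must match; the natural route is to prove $\ex[Y^r] \le \mu\cdot\prod_{j=1}^{r-1}(j+\mu)$ directly. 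For $r=1$ this is equality. For $r\ge 2$ I would use the recursion $\Gamma(z+1) = z\Gamma(z)$ on the left: $\ex[Y^{r}] = \ex[Y^{r-1}] \cdot \frac{\Gamma(\frac1p + \frac{(p-1)r}{p})}{\Gamma(\frac1p + \frac{(p-1)(r-1)}{p})}$, and the argument increases by $\frac{p-1}{p} < 1$ each step; by log-convexity of $\Gamma$ on the positives one can bound this ratio of Gamma values at points differing by less than $1$, showing the left side grows slower (per step) than the right side's factor $j+\mu$. Assembling these per-step comparisons inductively yields $\ex[Y^r] \le \mu\prod_{j=1}^{r-1}(j+\mu)$, hence $\ex[e^{\lambda Y}] \le \ex[e^{\lambda Z}]$, and since $\ex[Y] = \mu = \ex[Z]$ exactly, also $\ex[e^{\lambda(Y-\mu)}] \le \ex[e^{\lambda(Z-\mu)}] \le \exp(\lambda^2\mu/(2(1-\lambda)))$, which is precisely $Y \in \Gamma^+(\mu,1)$.

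The main obstacle I anticipate is making the moment comparison fully rigorous: the argument of the Gamma function on the $Y$ side, namely $\frac1p + \frac{(p-1)r}{p}$, grows at the \emph{sub-unit} rate $\frac{p-1}{p}$ per increment of $r$, while the $Gamma(\mu)$ side advances at unit rate per increment of $r$, so the number of ``effective'' Gamma-function steps on the $Y$ side is only about $\frac{p-1}{p}r$, not $r$. One has to show that even though $Y$ takes $r$ factors of roughly $(j+\mu)^{(p-1)/p}$-ish size against $Z$'s $r$ factors of size $(j+\mu)$, the product still comes out smaller — intuitively clear since $(p-1)/p < 1$ and the base points are comparable, but the bookkeeping of tracking $\Gamma$-ratios over non-integer shifts (and confirming the inequality does not reverse for small $r$ where the additive constant $\frac1p$ vs $\mu$ matters) needs care. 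A clean way around the delicate bookkeeping is to instead invoke log-convexity of $\log\Gamma$ to get, for the single-step ratio, $\frac{\Gamma(a + \frac{p-1}{p})}{\Gamma(a)} \le \frac{\Gamma(a+1)}{\Gamma(a)} = a$ applied at $a = \frac1p + \frac{(p-1)(r-1)}{p} \le r-1+\frac1p \le r-1+\mu$, which immediately gives the per-step bound; iterating from the $r=1$ base case then closes the induction without any tightrope-walking. The one genuinely necessary external input beyond Fact~\ref{fact:ggmean} and the moment formula is log-convexity of the Gamma function (the Bohr–Mollerup property), which is standard.
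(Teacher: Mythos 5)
Your proposal is correct and follows essentially the same route as the paper: expand the moment-generating function, show the termwise moment bound $\ex[Y^r]\leq \Gamma(\mu+r)/\Gamma(\mu)$, and conclude from the fact that $Gamma(\mu)\in\Gamma^+(\mu,1)$. The only difference is the technical device for the moment comparison (your stepwise log-convexity bound $\Gamma(a+s)/\Gamma(a)\leq a^s\leq a$, valid here since the relevant base points satisfy $a\geq 1$ for $r\geq 2$, versus the paper's use of monotonicity of $\Gamma$ on $[1.5,\infty)$ together with monotonicity of $x\mapsto\Gamma(x+r)/\Gamma(x)$), which is an immaterial variation.
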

\begin{proof}
We compare the moment-generating function of (the centered version of) $Y$ to that of $X = Gamma(\mu)$ where $\mu = \ex[Y]$. $X$ is in $\Gamma(\mu, 1)$ so it suffices to show $Y$'s moment generating function is smaller than $X$'s. First, looking at the moment generating function of $Y$, we have:

\begin{align*}
\ex[e^{\lambda Y}]&=1 + \lambda \mu + \sum_{r=2}^\infty \left[\frac{\lambda^r}{r!} \ex[Y^r] \right]\\
&=1 + \lambda \mu + \sum_{r=2}^\infty \left[\frac{\lambda^r}{r!} \frac{\Gamma(\frac{1}{p} + \frac{r(p-1)}{p})}{\Gamma(\frac{1}{p})} \right]\\
&\stackrel{(a)}{\leq}1 + \lambda \mu + \sum_{r=2}^\infty \left[\frac{\lambda^r}{r!} \frac{\Gamma(\frac{1}{p} + r)}{\Gamma(\frac{1}{p})} \right] \\
&\stackrel{(b)}{\leq}1 + \lambda \mu + \sum_{r=2}^\infty \left[\frac{\lambda^r}{r!} \frac{\Gamma(\mu + r)}{\Gamma(\mu)} \right] = 
\ex[e^{\lambda X}].
\end{align*}

$(a)$ follows because the Gamma function is monotonically increasing in the range $[1.5, \infty)$. $(b)$ follows because $\mu = \frac{1}{\Gamma(1/p)} \geq 1/p$ for $p \geq 1$, and because for positive integers $r$, $\frac{\Gamma(x + r)}{\Gamma(x)} = \prod_{i=0}^{r-1} (x+i)$ is monotonically increasing in $x$. Since $X \in \Gamma^+(\mu, 1)$ and $X, Y$ have the same mean, we have that $Y \in \Gamma^+(\mu, 1)$ as well. 
\end{proof}
\begin{lemma}
Let $Y = GGamma(\frac{1}{p-1}, \frac{p}{p-1})$ for $p \geq 3$. Then, for $\mu = \ex[Y] = \frac{1}{\Gamma(1/p)}$, we have $Y \in \Gamma^-(\mu, 3/2)$.
\end{lemma}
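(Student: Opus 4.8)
The plan is to use that $Y$ is supported on $(0,\infty)$, so that the centered variable $W:=\mu-Y$ is bounded above by $\mu$ (although unbounded below), and then to run the standard Bennett-type argument for one-sidedly bounded random variables, whose output is precisely a sub-gamma bound. No direct termwise comparison of moment generating functions (as in the preceding $\Gamma^+$ lemma) seems to work here, because the series for $\ex[e^{-\lambda Y}]$ alternates in sign and an upper bound on the moments of $Y$ does not control the odd terms.

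First I would record the ingredients. Using the moment formula $\ex[Y^r]=\Gamma((a+r)/b)/\Gamma(a/b)$ with $a=\tfrac1{p-1}$, $b=\tfrac{p}{p-1}$, one gets $\ex[Y^2]=\Gamma(2-1/p)/\Gamma(1/p)$; since $2-1/p\in(1,2]$ for $p\ge 2$ and $\Gamma(x)\le 1$ on $[1,2]$, this gives $\ex[Y^2]\le 1/\Gamma(1/p)=\mu$, hence $\ex[W^2]=\mathrm{Var}(Y)\le\ex[Y^2]\le\mu$. Also $\ex[W]=0$, and by Fact~\ref{fact:ggmean} we have $\mu=1/\Gamma(1/p)\le 1.2/p<1$.

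Next I would invoke the elementary fact that $\phi(t):=(e^t-1-t)/t^2$, extended by $\phi(0):=\tfrac12$, is positive and nondecreasing on all of $\mathbb{R}$, so that $e^x=1+x+x^2\phi(x)\le 1+x+x^2\phi(x_0)$ whenever $x\le x_0$. Fixing $\lambda\in(0,2/3)$ and applying this pointwise with $x=\lambda W\le\lambda\mu=:x_0$, then taking expectations (legitimate since $\ex[W^2]<\infty$) and using $\ex[W]=0$, I obtain
$$\ex\!\left[e^{\lambda(\mu-Y)}\right]=\ex\!\left[e^{\lambda W}\right]\le 1+\lambda^2\ex[W^2]\,\phi(\lambda\mu)\le\exp\!\left(\lambda^2\ex[W^2]\,\phi(\lambda\mu)\right).$$
To finish, I would use the standard inequality $\phi(t)\le\tfrac1{2(1-t/3)}$ for $t\in[0,3)$ (by comparing power series: $\phi(t)=\sum_{r\ge 2}t^{r-2}/r!$ and $\tfrac1{2(1-t/3)}=\sum_{r\ge 2}t^{r-2}/(2\cdot 3^{r-2})$, and $2\cdot 3^{r-2}\le r!$ for all $r\ge 2$), together with $\ex[W^2]\le\mu$ and $\lambda\mu/3\le\tfrac32\lambda$ (valid since $\mu<1<\tfrac92$), to get
$$\lambda^2\ex[W^2]\,\phi(\lambda\mu)\le\frac{\lambda^2\mu}{2(1-\lambda\mu/3)}\le\frac{\lambda^2\mu}{2(1-\tfrac32\lambda)},$$
which is exactly the defining inequality for $Y\in\Gamma^-(\mu,3/2)$.

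The step I expect to require the most care is the monotonicity of $\phi$ on the whole real line (not merely on $[0,\infty)$): it is what makes $\phi(\lambda W)\le\phi(\lambda\mu)$ valid even when $W$ is very negative, and only after that can one safely multiply by $W^2\ge 0$ and integrate. Everything else is bookkeeping; note the argument never uses that $p$ is an integer, nor anything beyond $p\ge 2$, so the hypothesis $p\ge 3$ is comfortably sufficient. Equivalently one could just cite Bennett's inequality from \cite{BoucheronLM13} applied to the one-sidedly bounded variable $W$, but the self-contained version above is short enough to include directly.
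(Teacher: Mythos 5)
Your proof is correct, and it takes a genuinely different route from the paper's. The paper proves $Y \in \Gamma^-(\mu, 3/2)$ by expanding $\ex[e^{-\lambda Y}]$ as an alternating series, pairing consecutive terms, and comparing term-by-term against the MGF of $Gamma(\mu)$; this forces it to control ratios such as $\Gamma(\tfrac1p + 2r\tfrac{p-1}{p})/\Gamma(\tfrac1p + 2r)$, and the key monotonicity claims (the functions $f(r,p)$ and $z(x)$, with a digamma limit) are only ``verified analytically,'' which is also why the hypothesis $p \geq 3$ appears. You instead exploit the one-sided boundedness $\mu - Y \leq \mu$ and run the standard Bennett-type argument: the monotonicity of $\phi(t) = (e^t - 1 - t)/t^2$ on all of $\mathbb{R}$ is a classical fact (it follows from $g(t) = (t-2)e^t + t + 2$ having $g'' (t)= te^t$, so $g(t)/t^3 \geq 0$), your second-moment bound $\ex[Y^2] = \Gamma(2 - \tfrac1p)/\Gamma(\tfrac1p) \leq 1/\Gamma(\tfrac1p) = \mu$ is immediate from the stated moment formula since $\Gamma \leq 1$ on $[1,2]$, and the chain $\phi(\lambda\mu) \leq \tfrac{1}{2(1-\lambda\mu/3)} \leq \tfrac{1}{2(1 - \tfrac32\lambda)}$ for $\lambda \in (0,2/3)$ lands exactly on the defining inequality of $\Gamma^-(\mu, 3/2)$. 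Your route buys several things: it is fully self-contained (no numerically verified monotonicity claims), it works for all $p \geq 2$ rather than $p \geq 3$, and it in fact proves the stronger statement $Y \in \Gamma^-(\mu, \mu/3)$, of which the claimed scale $3/2$ is a comfortable relaxation; what the paper's comparison-to-Gamma approach buys is mainly structural symmetry with its $\Gamma^+$ lemma, where the same MGF comparison is used without the sign complications.
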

\begin{proof}
Similarly to the previous lemma, we have for all $0 \leq \lambda \leq 2/3$:

\begin{align*}
\ex[e^{-\lambda Y}]&=
1 - \lambda \mu + \sum_{r=2}^\infty \left[\frac{(-\lambda)^r}{r!} \frac{\Gamma(\frac{1}{p} + \frac{r(p-1)}{p})}{\Gamma(\frac{1}{p})} \right]\\
&=1 - \lambda \mu + \sum_{r=1}^\infty \left[ \frac{\lambda^{2r}}{(2r)!} \cdot \frac{\Gamma(\frac{1}{p} + 2r \frac{p-1}{p})}{\Gamma(\frac{1}{p})} \left(1 - \frac{\lambda}{2r+1} \cdot \frac{\Gamma(\frac{1}{p} + (2r+1) \frac{p-1}{p})}{\Gamma(\frac{1}{p} + 2r \frac{p-1}{p})} \right)\right] \\
&=1 - \lambda \mu + \sum_{r=1}^\infty \left[ \frac{\lambda^{2r}}{(2r)!} \cdot \frac{\Gamma(\frac{1}{p} + 2r)}{\Gamma(\frac{1}{p})} \left(\frac{\Gamma(\frac{1}{p} + 2r\frac{p-1}{p})}{\Gamma(\frac{1}{p} + 2r)} - \frac{\lambda}{2r+1} \cdot \frac{\Gamma(\frac{1}{p} + (2r+1) \frac{p-1}{p})}{\Gamma(\frac{1}{p} + 2r)} \right)\right] \\
&\stackrel{(c)}{\leq}1 - \lambda \mu + \sum_{r=1}^\infty \left[ \frac{\lambda^{2r}}{(2r)!} \cdot \frac{\Gamma(\frac{1}{p} + 2r)}{\Gamma(\frac{1}{p})} \left(1 - \frac{\lambda}{2r+1} \cdot \frac{\Gamma(\frac{1}{p} + 2r+1)}{\Gamma(\frac{1}{p} + 2r)} \right)\right] \\
& \stackrel{(d)}{\leq}1 - \lambda \mu + \sum_{r=1}^\infty \left[ \frac{\lambda^{2r}}{(2r)!} \cdot \frac{\Gamma(\mu + 2r)}{\Gamma(\mu)} \left(1 - \frac{\lambda}{2r+1} \cdot \frac{\Gamma(\mu + 2r+1)}{\Gamma(\mu + 2r)} \right)\right] \\
& =1 - \lambda \mu + \sum_{r=2}^\infty \left[ \frac{(-\lambda)^r}{r!} \cdot \frac{\Gamma(\mu + r)}{\Gamma(\mu)}\right] = \ex[e^{-\lambda X}].
\end{align*}

Which, up to proving $(c), (d)$ hold, shows that $Y \in \Gamma^-(\mu, 3/2)$ since $X$ and $Y$ have the same mean and $X \in \Gamma^-(\mu, 0) \subset \Gamma^-(\mu, 3/2)$. $(c)$ follows because the change in each term in the sum is 

$$\frac{\lambda^{2r}}{(2r)!} \frac{1}{\Gamma(\frac{1}{p})} \left[\Gamma(\frac{1}{p} + 2r) - \Gamma(\frac{1}{p}+2r\frac{p-1}{p}) - \frac{\lambda}{2r+1}\left(\Gamma(\frac{1}{p}+2r+1) - \Gamma(\frac{1}{p}+(2r+1)\frac{p-1}{p}) \right) \right].$$

To show this expression is non-negative, it suffices to show that just the term in the brackets is positive, or equivalently, for all $r \geq 2, p \geq 3$:

$$\Gamma(\frac{1}{p} + 2r)\left(1 - \frac{\Gamma(\frac{1}{p} + 2r\frac{(p-1)}{p})}{\Gamma(\frac{1}{p} + 2r)}\right) \geq \frac{\lambda}{2r+1}\Gamma(\frac{1}{p}+2r+1)\left(1- \frac{\Gamma(\frac{1}{p}+(2r+1)\frac{p-1}{p})}{\Gamma(\frac{1}{p}+2r+1)} \right) .$$

Since we have $\Gamma(\frac{1}{p} + 2r + 1) = (\frac{1}{p} + 2r)\Gamma(\frac{1}{p} + 2r) \leq (2r + 1) (\frac{1}{p} + 2r)$, it further suffices to just show:

$$f(r, p) := \frac{\left(1 - \frac{\Gamma(\frac{1}{p} + 2r\frac{(p-1)}{p})}{\Gamma(\frac{1}{p} + 2r)}\right)}{\left(1- \frac{\Gamma(\frac{1}{p}+(2r+1)\frac{p-1}{p})}{\Gamma(\frac{1}{p}+2r+1)} \right)} \geq \lambda.$$

For any fixed $r \geq 2$, one can verify analytically that $f(r, p)$ is monotonically decreasing in $p$ over $p \in [1, \infty)$ and the limit as $p$ goes to infinity is $g(r) := \frac{2r \psi (2r)}{(2r+1) \psi (2r+1)}$ where $\psi$ is the digamma function $\psi(x) = \frac{\frac{\D}{\D{x}}\Gamma(x)}{\Gamma(x)}$. One can also verify analytically that $g(r)$ is monotonically increasing, and $g(2) \approx .6672$. So, for all $r \geq 2, p \geq 3$ we have $f(r, p) > 2/3$ and thus for $\lambda \in [0, 2/3]$, the inequality $(c)$ is satisfied.

$(d)$ follows by looking at the function

$$z(x) = \frac{\Gamma(x + r)}{\Gamma(x)} \left(1 - \frac{\lambda}{r+1} \cdot \frac{\Gamma(x + r + 1)}{\Gamma(x + r)}\right) = \left(1 - \frac{\lambda (x+r)}{r+1} \right)\prod_{i=0}^{r-1} (x+i).$$

For $r \geq 2, \lambda \leq 1$, one can verify analytically that $z(x)$ is monotonically increasing in the interval $(0, 1/2] \supseteq (0, \frac{1.2}{p}] \supseteq (0, \mu]$. Since $\mu \geq \frac{1}{p}$, this gives that each term in the right-hand-side of $(d)$ is larger than the corresponding term on the left-hand-side.
\end{proof}

\end{document}